\def\titlerunning#1{\gdef\titrun{#1}}
\def\author#1{\gdef\autrun{\def\and{\unskip, }#1}\gdef\@author{#1}}
\def\address#1{{\def\and{\\\hspace*{15.6pt}}\renewcommand{\thefootnote}{}\footnote{#1}}\markboth{\autrun}{\titrun}}
\def\email#1{email: \href{mailto:#1}{#1} }
\def\subjclass#1{\par\bigskip\noindent\textbf{Mathematics Subject Classification 2020.} #1}
\def\keywords#1{\par\smallskip\noindent\textbf{Keywords.} #1}
\newenvironment{dedication}{\itshape\center}{\par\medskip}
\newenvironment{acknowledgments}{\bigskip\small\noindent\textit{Acknowledgments.}}{\par}
\newtheorem{thm}{Theorem}[section]
\newtheorem{lem}[thm]{Lemma}
\theoremstyle{definition}
\numberwithin{equation}{section}
\newtheorem{conjecture}[thm]{Conjecture}
\newcommand{\R}{\mathbb{R}}
\newcommand{\Z}{\mathbb{Z}}
\newcommand{\C}{\mathbb{C}}
\newcommand{\N}{\mathbb{N}}
\newcommand\1{{\mathds 1}}
\newcommand{\ii}{\infty}
\newcommand{\nn}{\nonumber}
\DeclareMathOperator{\sn}{sn}
\DeclareMathOperator{\VTr}{\underline{Tr}\,}
\DeclareMathOperator{\tr}{Tr\,}
\renewcommand{\epsilon}{\varepsilon}
\newcommand{\eps}{\varepsilon}
\newcommand{\norm}[1]{ \left| \! \left| #1 \right| \! \right| }
\def\rd{{\mathrm{d}}}
\def\cL{{\mathcal L}}
\begin{document}

\titlerunning{Periodic Lieb-Thirring inequality}

\title{\textbf{The periodic Lieb-Thirring inequality}}

\author{Rupert L. Frank \and David Gontier \and Mathieu Lewin}

\date{}

\maketitle

\address{R. L. Frank: Mathematics 253-37, Caltech, Pasa\-de\-na, CA 91125, USA, and Mathematisches Institut, Ludwig-Maximilans Universit\"at M\"unchen, Theresienstr. 39, 80333 M\"unchen, Germany, and Munich Center for Quantum Science and Technology (MCQST), Schellingstr. 4, 80799 M\"unchen, Germany; \email{rlfrank@caltech.edu} \and D. Gontier: CEREMADE, University of Paris-Dauphine, PSL University, 75016 Paris, France; \email{gontier@ceremade.dauphine.fr} \and M. Lewin: CNRS and CEREMADE, University of Paris-Dauphine, PSL University, 75016 Paris, France; \email{mathieu.lewin@math.cnrs.fr}}

\begin{dedication}
To Ari Laptev on the occasion of his 70th birthday
\end{dedication}

\begin{abstract}
We discuss the Lieb-Thirring inequality for periodic systems, which has the same optimal constant as the original inequality for finite systems. This allows us to formulate a new conjecture about the value of its best constant. To demonstrate the importance of periodic states, we prove that the 1D Lieb-Thirring inequality at the special exponent $\gamma=3/2$ admits a one-parameter family of periodic optimizers, interpolating between the one-bound state and the uniform potential. Finally, we provide numerical simulations in 2D which support our conjecture that optimizers could be periodic.

\subjclass{Primary 81Q10}
\keywords{Lieb-Thirring inequality, periodic Schr\"odinger operators}
\end{abstract}

\bigskip

The Lieb-Thirring inequality~\cite{LieThi-75,LieThi-76} plays an important role in mathematical physics. It has been used to give a short proof of the stability of matter~\cite{LieThi-75,LieSei-09} and it is a fundamental tool for studying large fermionic systems. A famous problem is to determine its \emph{optimal constant}. In this paper we discuss the extension of the Lieb-Thirring inequality to periodic systems, which has the same optimal constant as the original inequality. In light of our recent work~\cite{FraGonLew-20_ppt}, we then formulate a conjecture about the value of this best constant, different from the original Lieb-Thirring conjecture in~\cite{LieThi-76}. In one space dimension we prove the optimality of some periodic potentials in the particular (completely integrable) case where the exponent equals $\gamma=3/2$. Finally, we provide numerical simulations in two space dimensions which show that periodic potentials give a better constant than all previously considered functions. 

\section{The periodic Lieb-Thirring inequality}

Let us first recall the usual Lieb-Thirring inequality for finite systems. Let $d\geq 1$ and 
\begin{equation}
\begin{cases}\gamma\geq\frac12& \text{for $d=1$,}\\ 
\gamma>0& \text{for $d=2$,}\\
\gamma\geq0& \text{for $d\geq3$.}
\end{cases}
\label{eq:constraint_kappa}
\end{equation}
The Lieb-Thirring inequality states that there is a universal constant $L_{\gamma,d}<\ii$ so that 
\begin{equation}
 \sum_{n=1}^\ii |\lambda_n(-\Delta+V)|^\gamma \leq L_{\gamma,d} \int_{\R^d} V(x)_-^{\gamma+\frac{d}2}\,{\rd x}
\label{eq:LT_V}
\end{equation}
for all $V\in L^{\gamma+d/2}(\R^d,\R)$. Here $a_-=\max(0,-a)$ is the negative part and $\lambda_n(-\Delta+V)$ is the $n$th min-max level of $-\Delta+V$ in $L^2(\R^d)$, which equals the $n$th negative eigenvalue (counted with multiplicity) when it exists and 0 otherwise. Under our assumptions on $V$ and $\gamma$, the operator $-\Delta+V$ has a bounded-below quadratic form and we always work with its Friedrichs extension. In the following $L_{\gamma,d}$ will always denote the \emph{best (smallest) constant}. The inequality~\eqref{eq:LT_V} was shown by Lieb-Thirring in~\cite{LieThi-75,LieThi-76} when all the inequalities are strict in~\eqref{eq:constraint_kappa}. The critical cases $\gamma=0$ in $d\geq3$ and $\gamma=1/2$ in $d=1$ are respectively due to  Cwikel-Lieb-Rozenblum~\cite{Cwikel-77,Lieb-76b,Rozenbljum-72} and Weidl~\cite{Weidl-96}. We refer to the recent review~\cite{Frank-20_ppt} for an up-to-date account of what is known and what is open for the inequality~\eqref{eq:LT_V}. 

Next, we turn to (infinite) periodic systems. Let $V\in L^{\gamma+d/2}_{\rm loc}(\R^d)$ be a periodic function, with $\gamma$ as in~\eqref{eq:constraint_kappa}. This means that there exists a discrete subgroup $\cL=\{\sum_{j=1}^dz_jv_j\ :\ z_j\in\Z\}$ with $v_1,...,v_d$ a basis of $\R^d$ such that $V(x+\ell)=V(x)$ for all $\ell\in\cL$ and almost all $x\in\R^d$. We emphasize here that the lattice $\cL$ is not fixed \emph{a priori} but it is part of the definition of $V$. There are in fact several possible such lattices, for instance $k\cL$ with $k\in\N$. Let $C=\{x\in\R^d\ :\ |x|=\min_{\ell\in\cL}|x-\ell|\}$ be the Wigner-Seitz unit cell of $\cL$. It is well known that the integral per unit volume converges to
\begin{equation}
 \lim_{\Omega_n\nearrow\R^d}\frac1{|\Omega_n|}\int_{\Omega_n} V(x)_-^{\gamma+\frac{d}2}\rd x=\frac1{|C|}\int_CV(x)_-^{\gamma+\frac{d}2}\rd x=:\fint V(x)^{\gamma+\frac{d}2}_-\,\rd x
 \label{eq:integral_unit_vol}
\end{equation}
under some natural conditions on the sequence of domains $\Omega_n$. For instance we can just choose $\Omega_n=n\Omega$, where $\Omega$ is a fixed convex open set. Of course, $\fint V(x)^{\gamma+d/2}_-\rd x$  in~\eqref{eq:integral_unit_vol} does not depend on the chosen lattice $\cL$ of periodicity for $V$. 

The spectrum of $-\Delta+V$ is properly described using the Bloch-Floquet transform~\cite[Sec.~XIII.16]{ReeSim4}. Let $\cL^*$ be the dual lattice (such that $k\cdot \ell\in2\pi\Z$ for all $k\in\cL^*$ and $\ell\in\cL$) and $B$ its Wigner-Seitz cell (the Brillouin zone). For any quasimomentum $\xi\in B$, we denote by $H_\xi:=|-i\nabla+\xi|^2+V(x)$ the Schr\"odinger operator on $L^2(C)$ with periodic boundary conditions on $\partial C$. Its eigenvalues form a non-decreasing sequence $\eps^V_n(\xi)$, each of which is Lipschitz in $\xi\in B$. The spectrum of the periodic Schr\"odinger operator is the union of the Bloch bands
$\sigma(-\Delta+V)=\bigcup_{n\geq1}\eps^V_n(B).$
Let us denote by $(-\Delta+V)_{|\Omega}$ the operator $-\Delta+V$ restricted to a domain $\Omega$ with Dirichlet boundary conditions at $\partial\Omega$.
We then have the thermodynamic limits~\cite{ReeSim4,Nakamura-01,DoiIwaMin-01}
\begin{align}
\lim_{n\to\ii}\frac{\sum_{j=1}^\ii\lambda_j\big((-\Delta+V)_{|\Omega_n}\big)_-^\gamma}{|\Omega_n|}&=\lim_{n\to\ii}\frac{\tr\left(\1_{\Omega_n}(-\Delta+V)^\gamma_-\1_{\Omega_n}\right)}{|\Omega_n|}\nn\\
&=\frac1{(2\pi)^d}\sum_{j\geq1}\int_B\eps_j^V(\xi)_-^\gamma\,\rd\xi=:\VTr(-\Delta+V)_-^\gamma.
\label{eq:trace_unit_vol}
\end{align}
The trace per unit volume $\VTr(-\Delta+V)_-^\gamma$ defined in~\eqref{eq:trace_unit_vol} does not depend on the chosen $\cL$, since it equals the first two limits. The following is a simple consequence of the usual Lieb-Thirring inequality but we have not found it stated explicitly anywhere. 

\begin{thm}[Periodic Lieb-Thirring inequality]\label{thm:LT_periodic}
For every periodic function $V\in L^{\gamma+d/2}_{\rm loc}(\R^d,\R)$ with $\gamma$ as in~\eqref{eq:constraint_kappa}, we have
\begin{equation}
\VTr(-\Delta+V)_-^\gamma\leq L_{\gamma,d} \fint V(x)^{\gamma+\frac{d}2}_-\,\rd x
\label{eq:LT_periodic}
\end{equation}
with the same optimal constant $L_{\gamma,d}$ as in the original Lieb-Thirring inequality~\eqref{eq:LT_V}. 
\end{thm}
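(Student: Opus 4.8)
The inequality~\eqref{eq:LT_periodic} itself is a soft consequence of~\eqref{eq:LT_V}. Let $\Omega_n=n\Omega$ be a sequence of domains as in~\eqref{eq:integral_unit_vol}--\eqref{eq:trace_unit_vol}. For each $n$ the Dirichlet realization $(-\Delta+V)_{|\Omega_n}$ has form domain $H^1_0(\Omega_n)$, and extending functions by zero identifies its quadratic form with the restriction to $H^1_0(\Omega_n)\subset H^1(\R^d)$ of the quadratic form of $-\Delta+V\1_{\Omega_n}$ on $L^2(\R^d)$, which is a genuine Schr\"odinger operator of the type in~\eqref{eq:LT_V} since $V\1_{\Omega_n}\in L^{\gamma+d/2}(\R^d)$. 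The min-max principle then gives $\lambda_j(-\Delta+V\1_{\Omega_n})\le\lambda_j\big((-\Delta+V)_{|\Omega_n}\big)$ for every $j$, hence, $t\mapsto t_-^\gamma$ being non-increasing,
\begin{equation*}
\sum_{j\ge1}\lambda_j\big((-\Delta+V)_{|\Omega_n}\big)_-^\gamma\le\sum_{j\ge1}\lambda_j\big(-\Delta+V\1_{\Omega_n}\big)_-^\gamma\le L_{\gamma,d}\int_{\Omega_n}V(x)_-^{\gamma+\frac d2}\,\rd x,
\end{equation*}
the last step being~\eqref{eq:LT_V} applied to $V\1_{\Omega_n}$. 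Dividing by $|\Omega_n|$ and letting $n\to\ii$, the left side converges to $\VTr(-\Delta+V)_-^\gamma$ by~\eqref{eq:trace_unit_vol} and the right side to $L_{\gamma,d}\fint V_-^{\gamma+d/2}$ by~\eqref{eq:integral_unit_vol}. This proves~\eqref{eq:LT_periodic}, and in particular that the best periodic constant is $\le L_{\gamma,d}$.

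For the reverse bound, let $L^{\mathrm{per}}_{\gamma,d}$ be the best constant in~\eqref{eq:LT_periodic} valid for all periodic potentials; I want $L^{\mathrm{per}}_{\gamma,d}\ge L_{\gamma,d}$. Fix $\epsilon>0$. By a standard density argument, $L_{\gamma,d}$ is still the optimal constant in~\eqref{eq:LT_V} when the test potentials are restricted to be compactly supported, so one may choose $W\le0$ supported in a ball $B_R$, with $-\Delta+W$ having nontrivial negative spectrum, such that $\sum_{j\ge1}\lambda_j(-\Delta+W)_-^\gamma\ge(L_{\gamma,d}-\epsilon)\int_{\R^d}W_-^{\gamma+d/2}$. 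For $N>0$ let $V_N$ be the $(2R+N)\Z^d$-periodization of $W$; the copies of $W$ entering $V_N$ then have pairwise disjoint supports at mutual distance $\ge N$, the cell $C_N$ satisfies $|C_N|=(2R+N)^d$, and $|C_N|\fint(V_N)_-^{\gamma+d/2}=\int_{\R^d}W_-^{\gamma+d/2}$ exactly. The key point is the decoupling limit
\begin{equation}
|C_N|\,\VTr(-\Delta+V_N)_-^\gamma\ \xrightarrow{N\to\ii}\ \sum_{j\ge1}\lambda_j(-\Delta+W)_-^\gamma. \label{eq:decoupling-plan}
\end{equation}
Granting~\eqref{eq:decoupling-plan}, applying~\eqref{eq:LT_periodic} to each periodic $V_N$ with the constant $L^{\mathrm{per}}_{\gamma,d}$ and passing to the limit gives $\sum_{j\ge1}\lambda_j(-\Delta+W)_-^\gamma\le L^{\mathrm{per}}_{\gamma,d}\int_{\R^d}W_-^{\gamma+d/2}$, hence $L_{\gamma,d}-\epsilon\le L^{\mathrm{per}}_{\gamma,d}$; letting $\epsilon\to0$ and combining with the first paragraph yields $L^{\mathrm{per}}_{\gamma,d}=L_{\gamma,d}$.

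It remains to prove~\eqref{eq:decoupling-plan} by two matching bounds. For the upper bound I would tile a large box $\Omega_n$ by $n^d$ copies of $C_N$ and add Dirichlet conditions on the interior faces: this only raises eigenvalues and decouples $(-\Delta+V_N)_{|\Omega_n}$ into $n^d$ identical copies of the single-cell operator $(-\Delta+W)_{|C_N}$, so~\eqref{eq:trace_unit_vol} gives $|C_N|\,\VTr(-\Delta+V_N)_-^\gamma\le\sum_{j\ge1}\lambda_j\big((-\Delta+W)_{|C_N}\big)_-^\gamma$, and as $N\to\ii$ the right side increases to $\sum_{j\ge1}\lambda_j(-\Delta+W)_-^\gamma$ — the negative Dirichlet eigenvalues on $C_N$ decreasing to those of $-\Delta+W$ on $\R^d$, with the sums dominated uniformly by $L_{\gamma,d}\int_{\R^d}W_-^{\gamma+d/2}$ via~\eqref{eq:LT_V}. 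For the lower bound, fix $J$, let $\psi_1,\dots,\psi_J$ be $L^2$-normalized eigenfunctions for the lowest eigenvalues of $-\Delta+W$, and note that since these eigenvalues lie below $\inf\sigma_{\mathrm{ess}}(-\Delta+W)=0$ the $\psi_j$ decay exponentially; cutting them off to a ball $B_{N/4}$ sitting strictly inside $C_N$ (possible for $N$ large) produces functions $\phi_j^N$ that are admissible test functions for the Bloch operator $H_\xi=|-i\nabla+\xi|^2+V_N$ for \emph{every} quasimomentum $\xi$, the twisted boundary conditions being trivially met, and on their span the Rayleigh quotient of $H_\xi$ equals that of $-\Delta+W$, independently of $\xi$. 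Since $\phi_j^N\to\psi_j$ in $H^1(\R^d)$, the min-max principle gives $\eps_j^{V_N}(\xi)\le\lambda_j(-\Delta+W)+o(1)$ as $N\to\ii$, uniformly in $\xi$, and integrating over the Brillouin zone (of volume $(2\pi)^d/|C_N|$) yields $|C_N|\,\VTr(-\Delta+V_N)_-^\gamma\ge\sum_{j=1}^J\big(\lambda_j(-\Delta+W)+o(1)\big)_-^\gamma$; letting $N\to\ii$ and then $J\to\ii$ closes the gap. The whole difficulty sits in this decoupling limit — specifically the $\xi$-uniform upper bound on the bands for the lower half of~\eqref{eq:decoupling-plan} and the uniform summability of the eigenvalue sums for the upper half; the periodic inequality~\eqref{eq:LT_periodic} itself needs nothing beyond the finite one.
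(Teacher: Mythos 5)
Your forward inequality is word-for-word the paper's. For the reverse direction you also take the paper's route --- periodize a compactly supported near-optimizer $W$ at large period $N$ and decouple as $N\to\ii$ --- but where the paper simply invokes the tight-binding result of~\cite{Daumer-93} for the convergence of the narrow negative bands, you attempt a self-contained argument, and two points need repair. Your Dirichlet bracketing gives the inequality in the opposite direction from what you write: imposing extra Dirichlet conditions on the interior faces \emph{raises} all eigenvalues (as you correctly state), which \emph{lowers} the sums $\sum_j\lambda_j(\cdot)_-^\gamma$, so that after dividing by $|\Omega_n|$ and using~\eqref{eq:trace_unit_vol} one gets
\begin{equation*}
|C_N|\,\VTr(-\Delta+V_N)_-^\gamma\ \geq\ \sum_{j\geq1}\lambda_j\big((-\Delta+W)_{|C_N}\big)_-^\gamma ,
\end{equation*}
that is, a \emph{lower} bound, not an upper bound. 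Fortuitously this is exactly the half you need: combined with $|C_N|\,\VTr(-\Delta+V_N)_-^\gamma\leq L^{\rm per}_{\gamma,d}\int_{\R^d}W_-^{\gamma+d/2}$ and the monotone convergence of the Dirichlet eigenvalues on $C_N$ to those of $-\Delta+W$, it yields $\sum_j\lambda_j(-\Delta+W)_-^\gamma\leq L^{\rm per}_{\gamma,d}\int_{\R^d}W_-^{\gamma+d/2}$. But then the entire cut-off eigenfunction step is superfluous, and what you actually establish is only the $\liminf$ half of your claimed decoupling limit, not the asserted equality (the $\limsup$ half would require a Neumann-type decoupling or a genuine band-width estimate as in~\cite{Daumer-93}).

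If you do want to keep the trial-function argument, there is a small but real error in it: the Rayleigh quotient of $H_\xi=|-i\nabla+\xi|^2+V_N$ evaluated on $\phi_j^N$ itself is \emph{not} $\xi$-independent --- it contains a cross term linear in $\xi$ and a $|\xi|^2\int|\phi_j^N|^2$ term. You must instead test $H_\xi$ with the twisted functions $e^{-i\xi\cdot x}\phi_j^N$, which are still admissible (being compactly supported in $C_N$, hence periodic), and for which the identity $(-i\nabla+\xi)\big(e^{-i\xi\cdot x}\phi_j^N\big)=e^{-i\xi\cdot x}(-i\nabla\phi_j^N)$ makes the Rayleigh quotient of $H_\xi$ coincide with that of $-\Delta+W$ on $\phi_j^N$, as you intended. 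With these two repairs your proof is correct and, modulo filling in the citation with an explicit argument, is essentially the same as the paper's.
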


\begin{proof}
We quickly outline the argument. Let $V$ be a periodic function as in the statement. From the variational principle and the Lieb-Thirring inequality~\eqref{eq:LT_V}, we  have 
$$\sum_{j\geq1}\lambda_j\left((-\Delta+V)_{|\Omega}\right)^\gamma_-\leq \sum_{j\geq1}\lambda_j\left(-\Delta+V\1_{|\Omega}\right)^\gamma_-\leq L_{\gamma,d}\int_\Omega V(x)^{\gamma+d/2}_-\rd x.$$
Passing then to the limit $\Omega\nearrow\R^d$ using~\eqref{eq:integral_unit_vol} and~\eqref{eq:trace_unit_vol}, we obtain~\eqref{eq:LT_periodic}. Let us now call $L_{\gamma,d}^{\rm per}\leq L_{\gamma,d}$ the best constant in~\eqref{eq:LT_periodic} and show that it coincides with $L_{\gamma,d}$. Let $v\in C^\ii_c(\R^d,\R)$ and define the $\ell\Z^d$--periodic function $V_\ell(x)=\sum_{z\in\Z^d}v(x+\ell z)$. The negative spectrum of the periodic operator $-\Delta+V_\ell$ is composed of very narrow bands about each eigenvalue $\lambda_n(-\Delta+v)$~\cite{Daumer-93}. Passing to the limit $\ell\to\ii$ provides the reverse inequality
$\sum_{j\geq1}\lambda_j(-\Delta+v)_-^\gamma\leq L_{\gamma,d}^{\rm per}\int_{\R^d}v^{\gamma+d/2}_-$.
Optimizing over $v$ and using a simple density argument, we find that $L_{\gamma,d}^{\rm per}=L_{\gamma,d}$.
\end{proof}

We have considered the periodic case because such systems usually play an important role in statistical mechanics~\cite{BlaLew-15}. But a result similar to Theorem~\ref{thm:LT_periodic} holds for any potential $V$ which has some kind of ergodicity allowing for the limits~\eqref{eq:integral_unit_vol} and~\eqref{eq:trace_unit_vol} to exist and be independent of $\Omega_n$. 

Next, we discuss our conjecture on $L_{\gamma,d}$. Let us define the best constant 
\begin{equation}
L_{\gamma,d}^{(N)}:=\sup_{V\in L^{\gamma+d/2}(\R^d)}\frac{ \sum_{n=1}^N |\lambda_n(-\Delta+V)|^\gamma}{\int_{\R^d} V_-^{\gamma+\frac{d}2}}
\label{eq:LT_V_N}
\end{equation}
for the inequality similar to~\eqref{eq:LT_V} when only the $N$ first eigenvalues are retained. The one-bound-state constant $L_{\gamma,d}^{(1)}$ can be expressed in terms of the constant for the Gagliardo-Nirenberg embedding $H^1(\R^d)\hookrightarrow L^p(\R^d)$ with $p=(2\gamma+d)/(\gamma+d/2-1)$~\cite{Frank-20_ppt}. Another important number is the semiclassical constant
\begin{equation}
L_{\gamma,d}^{\rm sc}:=(2\pi)^{-d}\int_{\R^d}(|p|^2-1)_-^\gamma\rd p=
\dfrac{\Gamma(\gamma + 1)}{2^d \pi^{d/2} \Gamma(\gamma + d/2 + 1)}.
\label{eq:LT_sc}
\end{equation}
We have $L_{\gamma,d}\geq L^{\rm sc}_{\gamma,d}$ due to the semiclassical limit
$$\lim_{\hbar\to0}\frac{\tr\,(-\Delta+V(\hbar\cdot))_-^\gamma}{\int_{\R^d}V(\hbar x)^{\gamma+\frac{d}2}_-\,\rd x}=\lim_{\hbar\to0}\frac{\hbar^d\,\tr\,(-\hbar^2\Delta+V)_-^\gamma}{\int_{\R^d}V(x)^{\gamma+\frac{d}2}_-\,\rd x}=L^{\rm sc}_{\gamma,d}$$
for any $V\in C^0_c(\R^d,\R)$ with $V_-\neq0$. A different way to understand $L^{\rm sc}_{\gamma,d}$ is to simply take a constant potential $V=-\mu<0$ in the periodic inequality~\eqref{eq:LT_periodic} and remark that 
$$\fint V(x)^{\gamma+\frac{d}2}_-\,\rd x=\mu_+^{\gamma+d/2},\quad  \VTr(-\Delta-\mu)_-^\gamma=(2\pi)^{-d}\int_{\R^d}(|p|^2-\mu)_-^\gamma\,\rd p=\mu_+^{\gamma+d/2}L_{\gamma,d}^{\rm sc}.$$
The $N$-bound state constant $L_{\gamma,d}^{(N)}$ also has an interpretation in the setting of periodic systems. As we saw in the proof of Theorem~\ref{thm:LT_periodic}, a potential $v\in L^{\gamma+d/2}(\R^d)$ with exactly $N$ negative eigenvalues can be turned into a periodic potential $V_\ell(x)=\sum_{z\in\Z^d}v(x+\ell z)$ of large periodicity $\ell$ with $N$ negative Bloch bands converging to the eigenvalues of $-\Delta+v$ in the limit $\ell\to\ii$. We see that although the periodic model is perfect to describe the infinite uniform state, finite states can only be recovered asymptotically for an infinite period. For the usual Lieb-Thirring inequality, the reverse holds: finite bound-state potentials are well described but the semiclassical constant can only be obtained by a limiting procedure. Depending on what we think the optimal constant will be, it seems appropriate to study either~\eqref{eq:LT_V} or~\eqref{eq:LT_periodic}.

The original Lieb-Thirring conjecture~\cite{LieThi-76} stated that the optimal constant is given either by the one-bound state or by the semiclassical case: 
$L_{\gamma,d}\overset{?}{=}\max(L_{\gamma,d}^{(1)},L_{\gamma,d}^{\rm sc})$. Rephrased for our periodic inequality~\eqref{eq:LT_periodic}, this would mean that either $V\equiv\text{cnst}$ is optimal, or there is no optimizer since those would have an infinite period. In dimensions $d\leq7$ the two curves $\gamma\mapsto (L^{(1)}_{\gamma,d},L^{\rm sc}_{\gamma,d})$ cross at a unique critical exponent $\gamma_{1\cap\rm sc}(d)$ equal to $3/2$ in 1D, and approximately equal to $1.165378$ and $0.862689$ in 2D and 3D respectively~\cite{FraGonLew-20_ppt}. In dimensions $d\geq8$ we always have $L^{(1)}_{\gamma,d}<L^{\rm sc}_{\gamma,d}$. It is known that $L_{\gamma,d}=L_{\gamma,d}^{\rm sc}$ for all $\gamma\geq3/2$ in all dimensions~\cite{LieThi-76,AizLie-78,LapWei-00}, and that $L_{1/2,1}=L_{1/2,1}^{(1)}$~\cite{HunLieTho-98}. But the exact value of $L_{\gamma,d}$ has not been found in all the other cases. 

It is now understood that the situation ought to be more complicated than what was hoped in~\cite{LieThi-76}, except probably for $d=1$. In~\cite{FraGonLew-20_ppt} we proved that 
\begin{equation}
L^{(1)}_{\gamma,d}<L^{(2)}_{\gamma,d}\leq L_{\gamma,d}\quad\text{for }\gamma >\max\left\{0,2-\frac{d}2\right\},
 \label{eq:result_FraGonLew}
\end{equation}
\begin{equation}
 L^{(N)}_{\gamma,d}< L_{\gamma,d}\quad\text{for all $N\geq1$ when }    
\gamma\begin{cases}
     >\frac32&\text{for $d=1$,}\\
    >1&\text{for $d=2$,}\\
    \geq1&\text{for $d\geq3$.}
    \end{cases}
 \label{eq:result_FraGonLew2}
\end{equation}
This showed that the one-bound state constant $L^{(1)}_{\gamma,d}$ cannot be optimal in regions where it was known that $L^{(1)}_{\gamma,d}>L^{\rm sc}_{\gamma,d}$, like  $1/2<\gamma<0.862689$ in 3D and  $1<\gamma<1.165378$ in 2D. In fact, no finite-bound-state case can be optimal when $\gamma$ satisfies~\eqref{eq:result_FraGonLew2}. 

In~\cite{FraGonLew-20_ppt} we mentioned the possibility of a different scenario for the Lieb-Thirring optimal constant, which we would like to detail here. For $\gamma>1$ it is possible to interpret the Lieb-Thirring problem as optimizing the state of a quantum system described by its one-particle density matrix $\Gamma$, in the presence of a local nonlinear attraction of the form $-\int_{\R^d}\Gamma(x,x)^p\,\rd x$ and with a Tsallis-type entropy $\tr(\Gamma^q)$ where $p=(\gamma+d/2)'$ and $q=\gamma'$ are the corresponding dual exponents. This is explained later in Appendix~\ref{app:entropy} for completeness. In this physical interpretation, the property $L^{(N)}_{\gamma,d}< L_{\gamma,d}$ for all $N\geq1$ means that the system is willing to form an infinite cluster of particles. Stable infinite systems are usually found in several possible phases depending on the value of the parameters, including fluids and solids~\cite{BlaLew-15}. In our situation a fluid corresponds to $V$ being constant, in which case we obtain $L^{\rm sc}_{\gamma,d}$, as we have seen. The possibility of having a solid phase where $V$ is periodic does not seem to have been considered before in the literature for the best Lieb-Thirring constant. More complicated phases are sometimes observed in statistical mechanics (for instance translation-invariance is rarely broken in 2D but rotation-invariance can be). Here we are in a mean-field setting where $V\equiv\text{cnst}$  is the only translation-invariant state. This leads us to the following 

\begin{conjecture}[Value of the Lieb-Thirring constant]\label{conjecture}
\it For all $d\geq1$ and $\gamma$ satisfying~\eqref{eq:constraint_kappa}, we have that 


\noindent$\bullet$ \textbf{either} there exists $N\in\N$ and a potential $V\in L^{\gamma+d/2}(\R^d)$ with exactly $N$ negative eigenvalues optimizing~\eqref{eq:LT_V}, so that $L_{\gamma,d}=L^{(N)}_{\gamma,d}$;


\noindent$\bullet$ \textbf{or} $L^{(N)}_{\gamma,d}< L_{\gamma,d}$ for all $N\geq1$ and there exists an optimal \textbf{periodic} potential $V\in L^{\gamma+d/2}_{\rm loc}(\R^d)$ optimizing~\eqref{eq:LT_periodic}. This potential can be constant (then $L_{\gamma,d}=L^{\rm sc}_{\gamma,d}$) or not. 
\end{conjecture}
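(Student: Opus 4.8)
The statement is a conjecture, so what follows is a proposed line of attack whose main purpose is to locate the difficulty. By Theorem~\ref{thm:LT_periodic} we have $L_{\gamma,d}^{\rm per}=L_{\gamma,d}$, so it suffices to work in the periodic model and decide whether the supremum
\[
L_{\gamma,d}=\sup_{V\ \text{periodic}}\ \frac{\VTr(-\Delta+V)_-^\gamma}{\fint V(x)_-^{\gamma+d/2}\,\rd x}
\]
is attained and, if not, where the maximising sequences escape. The ratio is invariant under $V\mapsto\lambda^2V(\lambda\,\cdot)$ and does not decrease when $V$ is replaced by $-(V)_-$, so one may take a maximising sequence of nonpositive periodic potentials $V_n$, with periodicity lattices $\cL_n$ normalised to covolume $1$ and with $\fint(V_n)_-^{\gamma+d/2}=1$.

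First I would run a concentration--compactness analysis in this fixed-covolume class. The key tool is a Brezis--Lieb / IMS-type splitting of $\VTr(-\Delta+V)_-^\gamma$ under a spatial partition of the cell, the localisation error being small because the kinetic operator is insensitive to cut-offs on large scales; this makes the functional asymptotically additive over well-separated pieces, with a strictly favourable correction (a binding inequality) when two pieces are merged. Three scenarios then have to be dealt with. (i) \emph{Compactness}: the cell shape stays non-degenerate and, after a translation, $V_n$ converges in $L^{\gamma+d/2}_{\rm loc}$ to a periodic $V_\infty$ with a genuine finite unit cell; by lower semicontinuity $V_\infty$ is a periodic optimiser, constant or not, which is the second alternative. (ii) \emph{Cell degeneration}: at fixed covolume the cell may collapse in some directions and expand in the complementary ones; the collapsing directions should be eliminated by a dimensional-reduction argument and the expanding ones opened up to $\R^k$, leaving a lower-dimensional Lieb--Thirring problem, to be controlled inductively on the dimension, feeding in the cases where the constant is known ($L_{\gamma,d}=L_{\gamma,d}^{\rm sc}$ for $\gamma\ge3/2$, $L_{1/2,1}=L_{1/2,1}^{(1)}$). (iii) \emph{Dilution}: the negative part of $V_n$ breaks into infinitely many mutually distant clusters of vanishing number density; asymptotic additivity then pushes the ratio down to the best one-cluster value $\le L_{\gamma,d}^{(1)}$, and in the range where $L_{\gamma,d}^{(1)}=L_{\gamma,d}$ one is already in the first alternative, so along a maximising sequence this case is excluded.

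A separate input is needed to pin down the first alternative precisely: when the maximising configuration does concentrate into a bounded region (a finite cluster), one must know that the limiting object is a whole-space maximiser of~\eqref{eq:LT_V} with only \emph{finitely many} negative eigenvalues. Existence of whole-space maximisers and control of their number of bound states --- ruling out, for instance, a long Coulomb-like tail producing infinitely many eigenvalues for large $\gamma$ --- is itself a sub-problem, which the results~\eqref{eq:result_FraGonLew}--\eqref{eq:result_FraGonLew2} of the paper only partially address. Here the thermodynamic-limit estimates behind~\eqref{eq:trace_unit_vol} and the Daumer-type bounds used in the proof of Theorem~\ref{thm:LT_periodic} provide the bridge between the periodic and whole-space pictures.

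The main obstacle is the binding inequality of the second paragraph: one has to show that bringing clusters together --- equivalently, periodising on the finest available scale --- is \emph{strictly} advantageous, uniformly along the maximising sequence, so that the dilution scenario (iii) and its mixed variants genuinely cannot carry the supremum. This strict superadditivity is the classical hard core of any concentration--compactness argument, and in the present problem it appears essentially as hard as computing $L_{\gamma,d}$ itself --- which is precisely why the statement is offered as a conjecture rather than a theorem. The one regime where the missing rigidity can be supplied from outside is $d=1$, $\gamma=3/2$, where complete integrability of the relevant periodic operator makes the competition explicit; this is the content of the one-parameter family of periodic optimisers proved later in the paper, and it is the model case from which the general picture above is extrapolated.
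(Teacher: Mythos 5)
The statement is a \emph{conjecture}; the paper supplies no proof, only heuristic motivation (the fluid/solid phase analogy in statistical mechanics), the exact integrable example at $\gamma=3/2$, $d=1$ (Theorem~\ref{thm:1D}), and numerical evidence in two dimensions (Section~\ref{sec:numerics}). You correctly recognise this, and your text is a map of where a proof would have to go rather than a proof, which is the right register. Your concentration--compactness framing is a reasonable and standard attack, and you correctly identify the binding/strict-superadditivity inequality as the hard core --- this is precisely what is hidden in the paper's informal remark that in the regime~\eqref{eq:result_FraGonLew2} ``the system is willing to form an infinite cluster.''

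That said, two genuine gaps in your sketch deserve to be flagged. First, even a fully successful concentration--compactness analysis in the class of periodic potentials would not by itself single out \emph{periodicity}. The paper explicitly notes that an analogue of Theorem~\ref{thm:LT_periodic} holds for any ergodic potential for which the limits~\eqref{eq:integral_unit_vol} and~\eqref{eq:trace_unit_vol} exist, so the supremum over periodic $V$ is \emph{a priori} no larger than the supremum over this wider class; nothing in your scenario~(i) rules out that maximising sequences converge (in a suitable sense) to a quasi-periodic or random ergodic structure rather than a genuinely periodic one. Selecting periodicity among all possible ways of breaking translation invariance is the truly conjectural part, and your framework does not touch it. Second, the compactness step ``by lower semicontinuity $V_\infty$ is a periodic optimiser'' needs the trace per unit volume $\VTr(-\Delta+V)^\gamma_-$ to be upper semicontinuous along the maximising sequence while $\fint V_-^{\gamma+d/2}$ is lower semicontinuous; in the periodic Bloch-Floquet picture with a varying lattice this is delicate (bands can close and the spectral projections need not converge), and one should also acknowledge that the local limit could be the constant potential, landing on $L^{\rm sc}_{\gamma,d}$, which the conjecture allows but your scenario~(i) does not mention as a possibility. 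Finally, your scenario~(ii) (cell degeneration at fixed covolume) is somewhat orthogonal to the actual normalisation: by scaling the covolume is a free parameter, and a sequence of lattices degenerating at fixed covolume corresponds to quasi-one-dimensional confinement, for which a clean reduction to a lower-dimensional Lieb--Thirring problem is not obvious.
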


The Lieb-Thirring inequalities~\eqref{eq:LT_V} and~\eqref{eq:LT_periodic} are invariant under the scaling $t^2V(tx)$ and optimizers are never unique. In the periodic case all periods are therefore possible by scaling. The inequalities are also invariant under space translations and the possibility of a (non-constant) periodic optimizer would be a breaking of this symmetry. 

Our conjecture has been proved for all $\gamma\geq3/2$ where the optimal potential is constant, and for $\gamma=1/2$ in dimension $d=1$ where the one-bound-state case is best. A more precise conjecture would be that the system is in a fluid phase ($V\equiv\text{cnst}$) for $\gamma$ larger than some critical $\gamma_{\rm sc}(d)$, then goes to a solid phase when we decrease $\gamma$ until it hits a point at which the period diverges and finite systems become better. 

In dimension $d=1$, numerics indicates that one should have $L_{\gamma,1}=L_{\gamma,1}^{(1)}$ for $\gamma\leq3/2$ and the fluid phase $L_{\gamma,1}=L_{\gamma,1}^{\rm sc}$ for $\gamma\geq3/2$, see~\cite{Levitt-14} and Section~\ref{sec:numerics} below. We will see in the next section that the solid phase actually occurs, but only at the special point $\gamma=3/2$. In some sense all the possible phase transitions seem to be compressed at the unique point $\gamma=3/2$. In dimension $d=2$, the solid phase could be optimal in the region $1<\gamma<\gamma_{\rm sc}(2)$ for some $\gamma_{\rm sc}(2)\in(1.165378,3/2]$, see Section~\ref{sec:numerics}. In dimension $d=3$, it could occur for $1/2<\gamma<1$, if we believe that  
$L_{\gamma,3}>L_{\gamma,3}^{(N)}$ for all $N\geq1$ instead of just $N=1$ when $\gamma>1/2$, as well as the Lieb-Thirring conjecture that the semiclassical constant becomes optimal at $\gamma_{\rm sc}(3)=1$. 

\section{The one-dimensional integrable case $\gamma=3/2$ }\label{sec:1D}

We provide here a new result for $\gamma=3/2$ in dimension $d=1$. Using a link with the Korteweg-de Vries (KdV) equation, it was proved by Lieb and Thirring in~\cite{LieThi-76} that 
\begin{equation}
L_{3/2,1}= L^{(N)}_{3/2,1} =L^{\rm sc}_{3/2,1}=\frac{3}{16},\qquad \forall N\in\N.
\label{eq:value_L_1D}
\end{equation}
In fact, $L^{(N)}_{3/2,1} $ is attained for every $N\in\N$ and thus the Lieb-Thirring inequality has infinitely many optimizers at $\gamma=3/2$, modulo space translations. We show that it also admits a continuous family of \emph{periodic} optimizers, parametrized by their period. 

\begin{thm}[Periodic optimizers in the integrable case]\label{thm:1D}
Let $\gamma=3/2$ and $d=1$. For all $0<k<1$, we have equality in~\eqref{eq:LT_periodic} for the periodic Lam\'e potential
$$V_k(x) = 2k^2 \sn(x|k)^2 - 1 - k^2$$
of minimal period $\ell=2K(k)>0$. Here $\sn(\cdot|k)$ is a Jacobi elliptic function with modulus $k$ and $K(k)$ is the complete elliptic integral of the first kind with modulus $k$.
\end{thm}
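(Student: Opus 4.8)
The plan is to reduce both sides of~\eqref{eq:LT_periodic} to explicit functions of the complete elliptic integrals $K=K(k)$ and $E=E(k)$ (of the first and second kinds) and then to check that they coincide. The structural input is that $V_k$ is, up to the additive constant $-1-k^2$, the first Lam\'e potential $n(n+1)k^2\sn(x|k)^2$ with $n=1$, hence a \emph{one-gap} periodic potential. Concretely, a direct differentiation shows that $\operatorname{dn}(\cdot|k)$, $\operatorname{cn}(\cdot|k)$ and $\sn(\cdot|k)$ solve $(-\Delta+V_k)\psi=\lambda\psi$ for $\lambda=-1$, $-k^2$ and $0$ respectively; the first is $2K$-periodic and strictly positive, so $-1=\min\sigma(-\Delta+V_k)$, and by the classical theory of the $n=1$ Lam\'e equation $-k^2$ and $0$ are the two remaining band edges. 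Thus
$$\sigma(-\Delta+V_k)=[-1,-k^2]\cup[0,\infty),$$
with single gap $(-k^2,0)$; the negative spectrum is exactly the lowest band, and since $\xi\mapsto\eps^{V_k}_1(\xi)$ is even and monotone on $[0,\pi/(2K)]$,
$$\VTr(-\Delta+V_k)_{-}^{3/2}=\frac1{2\pi}\int_B\eps^{V_k}_1(\xi)_-^{3/2}\,\rd\xi=\frac1{\pi}\int_0^{\pi/(2K)}\bigl|\eps^{V_k}_1(\xi)\bigr|^{3/2}\,\rd\xi .$$

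Next I change variables from $\xi$ to the energy $\lambda=\eps^{V_k}_1(\xi)\in[-1,-k^2]$. For a one-gap potential the quasi-momentum differential takes the classical form
$$\rd\xi=\frac{(\lambda-c)\,\rd\lambda}{2\sqrt{R(\lambda)}},\qquad R(\lambda)=(\lambda+1)(\lambda+k^2)\lambda,$$
namely the unique second-kind Abelian differential on the genus-one spectral curve $w^2=R(\lambda)$ that is regular off $\infty$, behaves like $\rd\sqrt\lambda$ there (this fixes the prefactor $\tfrac12$), and whose period around the gap $(-k^2,0)$ vanishes (this fixes $c$). Substituting $t=-\lambda$ and then a Jacobi $\sn$-substitution, the gap condition yields $\kappa:=-c=1-E/K$, which lies in $(0,k^2)$; and the resulting identity that $\rd\xi$ has total band-period $\int_{-1}^{-k^2}\rd\xi=\pi/(2K)$ is, after the substitutions, exactly Legendre's relation $E\,K(k')+E(k')\,K-K\,K(k')=\tfrac\pi2$ with $k'=\sqrt{1-k^2}$ — a reassuring consistency check. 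One obtains
$$\VTr(-\Delta+V_k)_{-}^{3/2}=\frac1{2\pi}\int_{k^2}^1\frac{t\,(t-\kappa)}{\sqrt{(t-k^2)(1-t)}}\,\rd t,\qquad \kappa=1-\frac EK .$$

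The last integral is elementary: the substitution $t=\tfrac{1+k^2}2+\tfrac{1-k^2}2\sin\theta$ reduces it to the first and second moments of the arcsine law on $[k^2,1]$, yielding
$$\VTr(-\Delta+V_k)_{-}^{3/2}=\frac{3k^4-2k^2-1}{16}+\frac{1+k^2}{4}\cdot\frac EK .$$
On the right-hand side of~\eqref{eq:LT_periodic}, since $0<k<1$ we have $V_k\le k^2-1<0$ everywhere, so $(V_k)_-=-V_k=1+k^2-2k^2\sn(\cdot|k)^2$; using the classical values $\int_0^K\sn^2=(K-E)/k^2$ and $\int_0^K\sn^4=\bigl((2+k^2)K-2(1+k^2)E\bigr)/(3k^4)$ gives
$$\frac3{16}\fint(V_k)_-^2\,\rd x=\frac3{16K}\int_0^K\!\bigl(1+k^2-2k^2\sn(x|k)^2\bigr)^2\,\rd x=\frac{3k^4-2k^2-1}{16}+\frac{1+k^2}{4}\cdot\frac EK .$$
The two expressions agree, which is the claimed equality.

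The only genuinely delicate step is the middle one — identifying the quasi-momentum $1$-form of the one-gap potential and pinning down its two normalizations; it is pleasant that the second normalization \emph{is} Legendre's relation, and that once this is in place both sides of~\eqref{eq:LT_periodic} are literally the same low-degree expression in $k^2$ and $E/K$, with no ``integrable miracle'' left to verify. This step can be replaced either by quoting the explicit Bloch solutions of the Lam\'e equation (in terms of Jacobi theta and zeta functions) or by the finite-gap trace formula expressing $\fint V_k^2$ directly through the three band edges $\{-1,-k^2,0\}$ and the moving Dirichlet eigenvalue $\mu_1(x)=-k^2\sn(x|k)^2$. Finally, letting $k\to 0$ and $k\to 1$ recovers, respectively, the uniform potential $V_0\equiv-1$ (which gives $L^{\rm sc}_{3/2,1}$) and the one-soliton potential $V_1=-2\operatorname{sech}^2$ of infinite period (which gives $L^{(1)}_{3/2,1}$), so the family $\{V_k\}_{0<k<1}$ indeed interpolates between the two known optimizers, as announced in the abstract.
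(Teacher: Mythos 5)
Your proof is correct, and it follows essentially the same strategy as the paper: identify the spectrum, compute the trace per unit volume via the density of states, and match it against $\tfrac{3}{16}\fint(V_k)_-^2$. The explicit eigenfunctions $\mathrm{dn},\mathrm{cn},\sn$ at the band edges $-1,-k^2,0$ and the resulting values of $\fint\sn^2$ and $\fint\sn^4$ are the same; your integrated-density-of-states formula with $R(\lambda)=(\lambda+1)(\lambda+k^2)\lambda$ matches the paper's~\eqref{eq:nE} after the sign conventions on $c$ are reconciled. Two points of difference are worth noting. First, where the paper derives the density-of-states formula from scratch in Appendix~\ref{app:details} by constructing the Bloch solutions explicitly in Weierstrass $\sigma$/$\zeta$/$\wp$ language, you instead \emph{quote} the finite-gap characterization of $\rd\xi$ as the second-kind Abelian differential with vanishing gap-period and prescribed singularity at $\infty$ — a legitimate shortcut but one that leans on external integrable-systems input (and you honestly flag alternative routes). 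Second, and this is a genuinely different computational choice: the paper deliberately refrains from evaluating the constant $c=\frac{k^2}{K}\int_0^K\sn^2$, observing that the same $c$ appears linearly on both sides so that the identity holds for \emph{any} value of $c$ — a cleaner bookkeeping device. You instead compute $\kappa=1-E(k)/K(k)$ via the gap-period normalization and push both sides all the way down to closed expressions $\tfrac{3k^4-2k^2-1}{16}+\tfrac{1+k^2}{4}\tfrac{E}{K}$; this is more work but has the pedagogical merit of exposing Legendre's relation as the hidden consistency check for the total band period $\pi/(2K)$, and it verifies the identity rather than just reducing it to a tautology in $c$.
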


It is known that $V_k(x) \to -1$ uniformly as $k\to 0$ (resp.~$\ell\to0$) and that $V_k(x)\to -2(\cosh x)^{-2}=:V_1(x)$ locally uniformly as $k\to 1$ (resp.~$\ell\to\ii$). See Figure~\ref{fig:periodic_family}. Since $V_1$ is the optimum for $L^{(1)}_{3/2,1}$, $V_k$ interpolates continuously between the semiclassical and the one-particle regimes when we vary the periodicity $\ell$. The potential $V_k$ is a periodic traveling (cnoidal) wave for KdV~\cite{KorVri-95} and it is in fact a periodic superposition of $V_1$~\cite{Toda-70}. It is also very well known in the theory of one-dimensional periodic Schr\"odinger operators, since it produces a unique negative Bloch band and only one gap~\cite{Ince-40b,MagWin-66,Sutherland-73}. There are explicit families of periodic potentials with exactly $K$ negative Bloch bands for any $K\geq1$~\cite{DubNov-74} but they will not be discussed here.

\begin{figure}[!t]
\centering
\begin{tabular}{ccc}
\includegraphics[width=3.8cm]{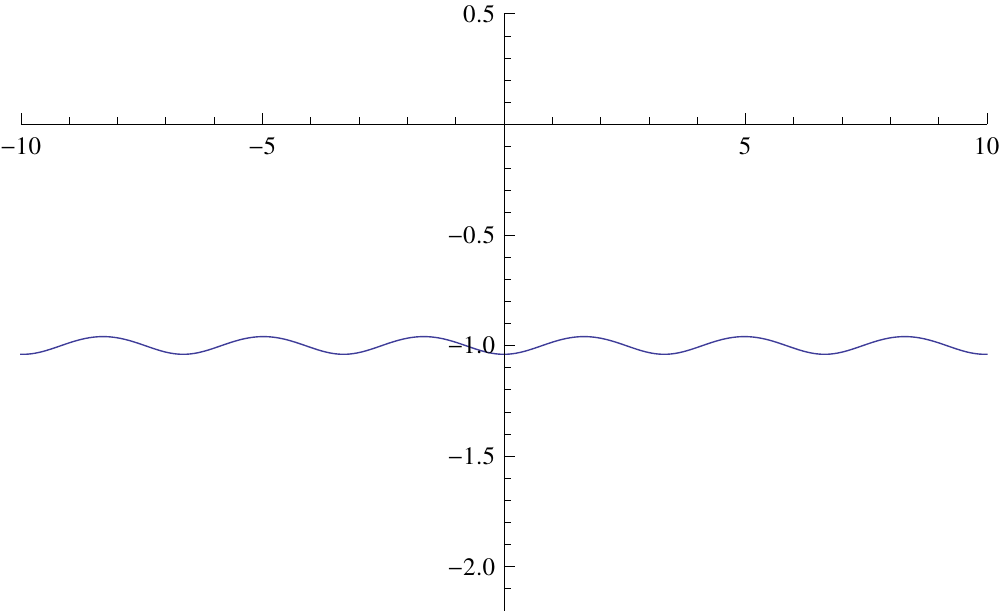}&\includegraphics[width=3.8cm]{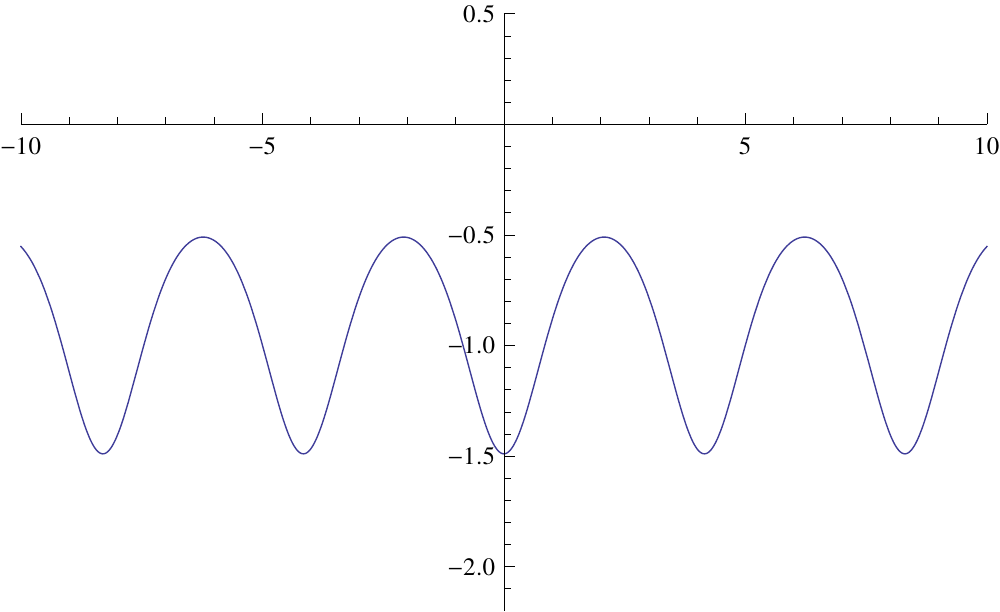}&\includegraphics[width=3.8cm]{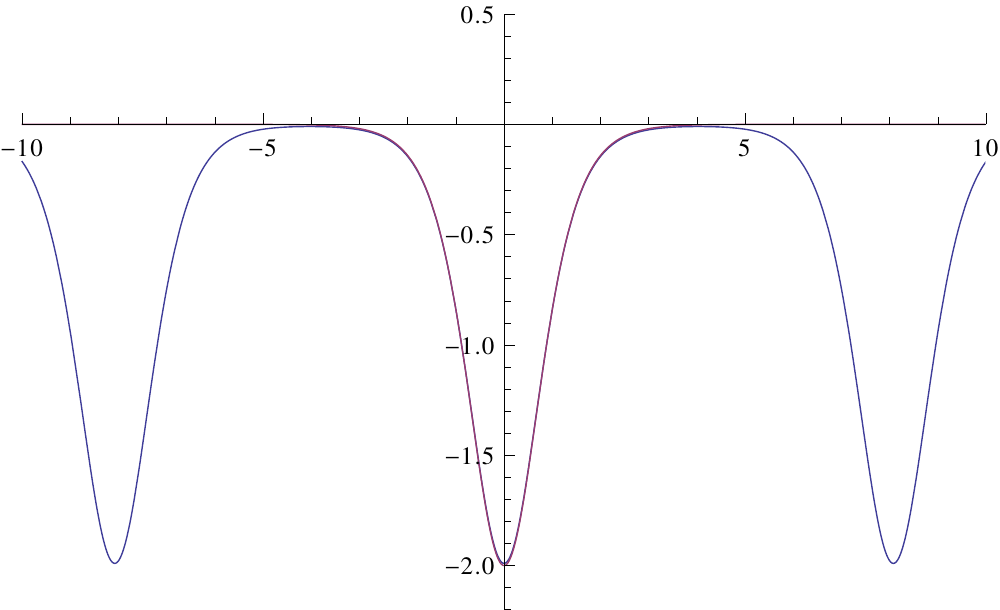}\\
\scriptsize $k=0.2$ ($\ell=3.32$)&\scriptsize $k=0.7$ ($\ell=4.15$)& \scriptsize $k=0.995$ ($\ell=8.08$)
\end{tabular}

\caption{Plot of $V_k$ for different values of $k$ and the period $\ell=2K(k)$ (together with $V_1$ for the last). \label{fig:periodic_family}}
\end{figure}

\begin{proof}
	We fix $0<k<1$. It is well known that $V_k$ is $2K(k)$ periodic~\cite{GraRyz}. Recall that the density of states $n(E)$ is defined by $\VTr \1_{(-\ii,E]}(-\Delta+V_k)=\int_{-\ii}^En(E')\,dE'$. The proof is based on the following explicit formula:
\begin{equation}
	n(E)=\frac{E+c}{2\pi\sqrt{(E+1)(E+k^2)E}}\left(-\1_{(-1,-k^2)}(E)+\1_{(0,+\ii)}(E)\right),
 \label{eq:nE}
\end{equation}
	where $c = \frac{k^2}{2K(k)} \int_{-K(k)}^{K(k)} \sn(x|k)^2\,dx$.\footnote{We do not need to compute $c$ explicitly, although this could be done using~\cite[5.134]{GraRyz}.} The formula~\eqref{eq:nE} is probably known to experts, but we provide a complete derivation in Appendix~\ref{app:details} for the convenience of the reader. This formula implies that the spectrum has a unique isolated Bloch band: $\sigma(-\Delta+V_k)=[-1,-k^2]\cup[0,\ii)$, as mentioned previously. Using 
	$$
	\int_{k^2}^1 \frac{t\,dt}{\sqrt{(1-t)(t-k^2)}} = \frac{\pi}{2} \left( 1 + k^2 \right)
	\quad\text{and}\quad
	\int_{k^2}^1 \frac{t^2\, dt}{\sqrt{(1-t)(t-k^2)}} = \frac{\pi}{8} \left( 3 +2 k^2 + 3 k^4 \right)
	$$
	(which follow from standard beta function integrals letting $t=(1-a)s+a$),
	we obtain
	$$
\VTr(-\Delta+V)_-^{3/2}=\int_{-\infty}^0 n(E)\, E_-^\frac32\,dE 
= \frac{1}{16} \left( 3 +2 k^2 + 3 k^4 \right) - \frac{c}{4}\left( 1 + k^2 \right).
	$$
	To compute the right side of~\eqref{eq:LT_periodic}, we use~\cite[5.131]{GraRyz}
	$$
	\int \,du = \sn u \ {\rm cn }\,u \ {\rm dn }\, u + 2 (1+k^2) \int \sn^2 u \,du - 3 k^2 \int \sn^4 u\,du \,,
	$$
	where we drop the parameter $k$ from the notation for simplicity. Since $\sn 0 = \sn 2K = 0$ by~\cite[8.151]{GraRyz}, we infer that
    $$
         \int_0^{2K} \sn^4 u\,du = \frac{1}{3k^2} \left(  2 (1+k^2) \int_0^{2K} \sn^2 u\,du - 2K \right) = 
         \frac{2 K}{3k^2} \left(  2 (1+k^2) \frac{c}{k^2} - 1 \right).
    $$
	Using $L_{3/2,1}=3/16$ as we have recalled in~\eqref{eq:value_L_1D}, this implies
	\begin{align*}
	 \frac1\ell \int_{-\frac\ell 2}^{\frac\ell 2} V(x)_-^2\,dx  & =   \frac{1}{2K} \left( 4k^4 \int_0^{2K} \sn^4 u\,du - 4k^2(1+k^2) \int_0^{2K} \sn^2 u\,du + (1+k^2)^2 2K \right) \\
	& =  \frac{4k^2}{3} \left( 2(1+k^2) \frac{c}{k^2} - 1 \right) - 4(1+k^2)c  + (1+k^2)^2  \\
	& = \big(L_{\frac32,1}\big)^{-1}\left(\frac{1}{16} \left( 3 +2 k^2 + 3 k^4 \right) - \frac{c}{4}\left( 1 + k^2 \right)\right)
	\end{align*}
and proves the assertion. 
\end{proof}

\section{Numerical simulations in 1D and 2D}\label{sec:numerics}

We have computed a numerical approximation of the optimal periodic potential $V$ in dimensions $d=1$ and $d=2$. In order to remove the scaling invariance, we fix a lattice $\cL$ with unit cell $C$ of volume $|C|=1$ and Brillouin zone $B$, and work with the additional constraint that the norm of $V$ is fixed. We also retain a fixed number $K$ of Bloch bands. In other words, for $I >0$ and $K\in\N$ we set
\begin{equation} \label{eq:def:M}
 L_{\gamma,d,\cL}(K,I) := \sup \bigg\{ \sum_{n=1}^K\frac1{|B|}\int_{B}  \eps^V_{n}(\xi)_-^\gamma \rd \xi,\
 V \in L^{\gamma + \frac{d}{2}}(C), \ \int_C V(x)_-^{\gamma + \frac{d}{2}}\rd x=I^{\gamma + \frac{d}{2}} \bigg\}
\end{equation}
where $\eps^V_n(\xi)$ is the $n$th eigenvalue of the operator $H_\xi=|-i\nabla+\xi|^2+V$ with periodic boundary conditions on $\partial C$ and quasimomentum $\xi$. The Lieb-Thirring constant equals
$$ L_{\gamma, d} := \sup_{I > 0,\ K\in\N,\ \cL} \frac{L_{\gamma,d,\cL}(K,I)}{I^{\gamma+\frac{d}2}}.$$
After scaling, varying $I$ is the same as changing the period of the potential. 

We solve~\eqref{eq:def:M} with an iterative fixed point-type algorithm. At each iteration $n$, we compute the $K$ first eigenvectors $u_{j, \xi}^{(n)}$ of $-\Delta + V^{(n)}$ with quasimomentum $\xi$, and set
\begin{equation}
     \rho^{(n)}(x) := \frac1{|B|}\sum_{j=1}^K\int_B \eps^{V^{(n)}}\!(\xi)_-^{\gamma - 1} | u^{(n)}_{j, \xi} (x)|^2\, \rd \xi, \qquad
    V^{(n+1)}(x) = - a_n\, \rho^{(n)}(x)^{\frac1{\gamma + d/2 - 1}},
\label{eq:SCF_num}
\end{equation}
with the constant $a_n>0$ chosen so that $\|V^{(n+1)}\|_{L^{\gamma + d/2}(C)} = I$. The corresponding objective function in~\eqref{eq:def:M} can be seen to increase with the iterations. We stop the algorithm when $\|V^{(n+1)} - V^{(n)}\|_{L^{\gamma + d/2}(C)}$ is smaller than a prescribed small parameter. For the initial potential $V^{(0)}$ we use a periodic arrangement of Gaussians. 

We represent a potential $V$ by its values on a $(N_C)^d$ regular grid in $C$. Since the obtained potentials seem smooth, the Riemann sum converges fast to $\int_C V_-^{\gamma + d/2}$ as $N_C$ gets large. The Brillouin zone integration is computed on a $(N_B)^d$ regular grid in $B$. When the operator $-\Delta+V$ has a gap above its $K$th band (which was always the case in our computations), the Brillouin zone integration converges exponentially fast in $N_B$. 

\medskip

\noindent\textbf{Results in one dimension for $K=1$ Bloch band.} Using Theorem~\ref{thm:1D} and the fact that $k\in(0,1)\mapsto \| \widetilde{V_k} \|_{L^2(0,1)}$ is increasing from $\pi^2$ to $+\infty$, where $\widetilde{V_k}$ is the $1$-periodic rescaled version of $V_k$ of Theorem~\ref{thm:1D}, one can prove that for $\gamma=1/2$, the problem $L_{\frac32, 1,\Z}(1,I)$ admits as maximizer the constant potential $V = - I$ for $I\in(0,\pi^2]$, and some $\widetilde{V_k}$ for $I\in[\pi^2,\infty)$. For $I < \pi^2$, the corresponding Hamiltonian is gapless, while for $I > \pi^2$, the first band is isolated from the rest by a gap of size $k^2$.

In Figure~\ref{fig:d=1}, we provide numerical results for $L_{\gamma, 1,\Z}(1,I)$ with $\gamma  \in [0.6, 2]$ and for different values of $I > 0$. Each $I>0$ seems to give rise to a branch of periodic optimizers. When $I<\pi^2$ (that is, $I\in\{2,\dots,8\}$ in the picture), the branch coincides with the semiclassical constant in a neighborhood of $\gamma=3/2$. When $I>\pi^2$ the branch passes through the (rescaled) solution $V_k$ at $\gamma=3/2$ and the corresponding potentials are never constant. All the curves cross at $\gamma = 3/2$, as expected from Theorem~\ref{thm:1D}. For $\gamma < 3/2$, the curves are all below the one-bound-state constant $L_{\gamma, 1}^{(1)}$, while for $\gamma > 3/2$, they are all below the semiclassical constant $L_{\gamma, 1}^{\rm sc}$. This is in agreement with the Lieb-Thirring conjecture in one dimension. 

\begin{figure}[t]
\centering
    \includegraphics[width=6cm]{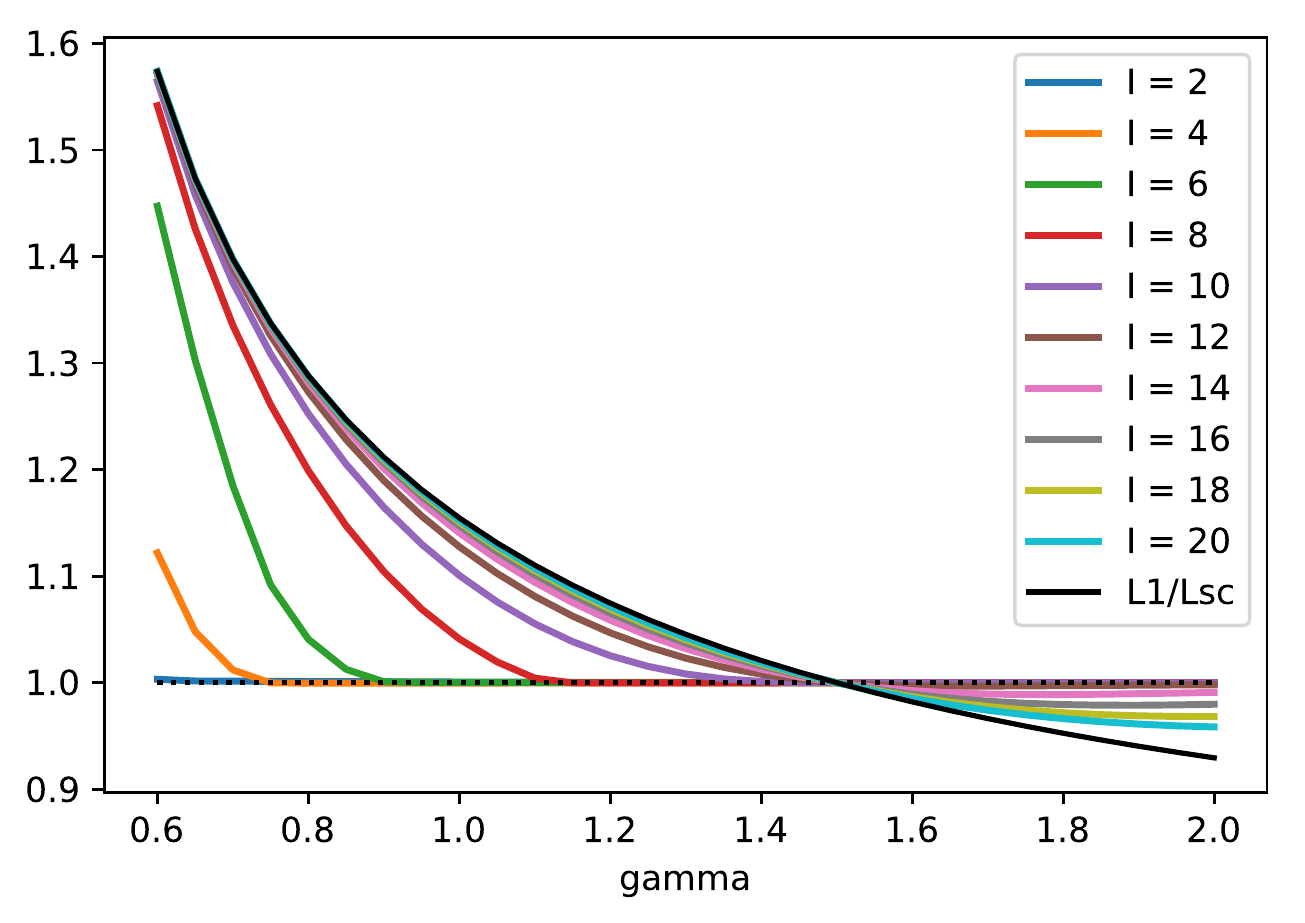} \includegraphics[width=6cm]{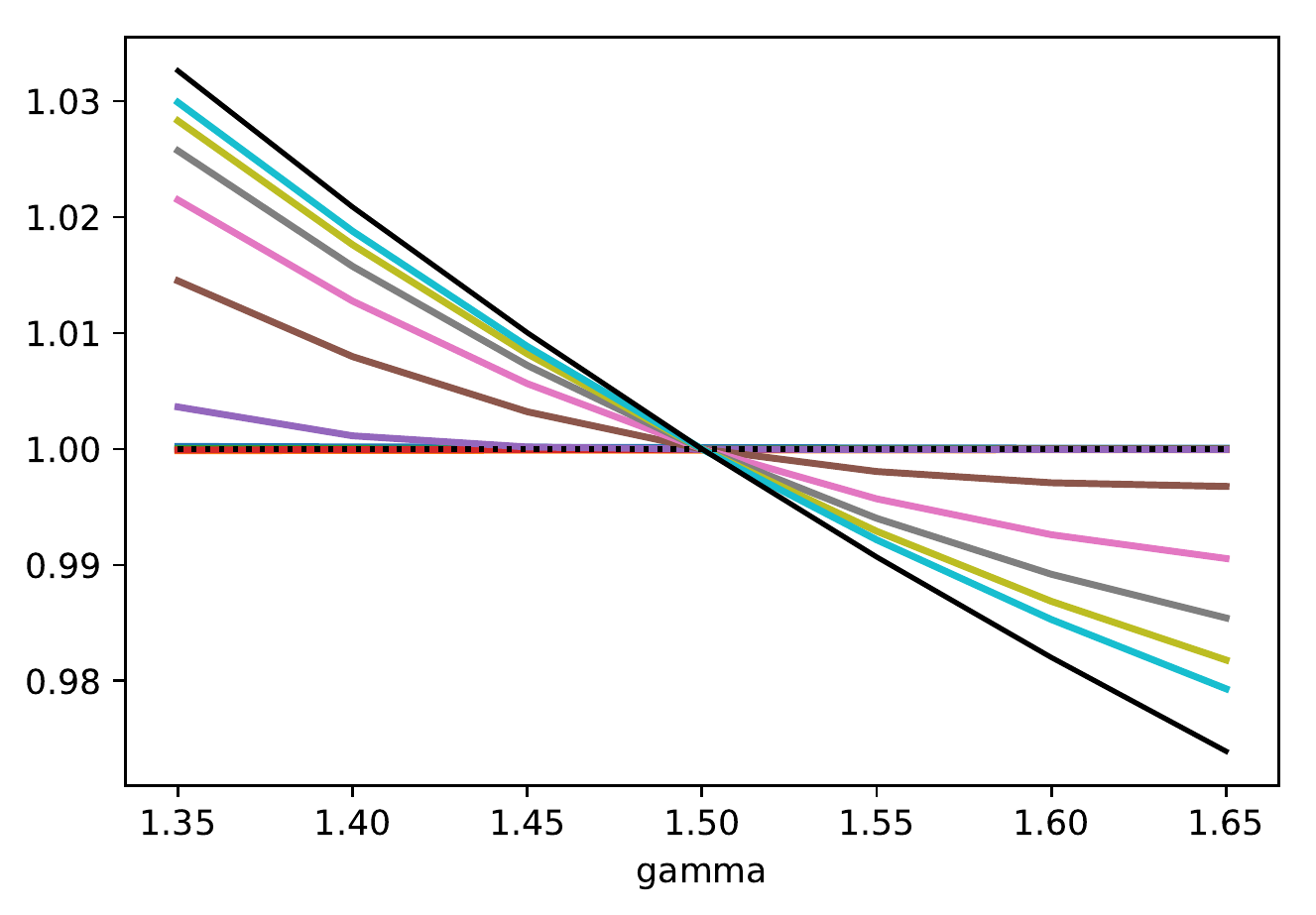}
    \caption{$L_{\gamma,1,\Z}(1,I) / L_{\gamma, 1}^{\rm sc}$ for different values of $I>0$. The black curve is $L_{\gamma, 1}^{(1)} / L_{\gamma, 1}^{\rm sc}$.}
    \label{fig:d=1}
\end{figure}

\medskip

\noindent\textbf{Results in two dimensions.} In dimension $d=2$, we recall that the curves $L_{\gamma, 2}^{(1)}$ and $L_{\gamma, 2}^{\rm sc}$ cross at $\gamma_{1 \cap {\rm sc}}(2) \simeq 1.165378$. In~\cite{FraGonLew-20_ppt}, we proved that the critical exponent at which the semiclassical constant becomes optimal satisfies $\gamma_{\rm sc}(2) > \gamma_{1 \cap {\rm sc}}(2)$. We will here provide numerical evidence of the strict inequality, but also that $\gamma_{\rm sc}(2)$ could be quite close to $\gamma_{1 \cap {\rm sc}}(2)$. As we will see, a very high precision is needed to be able to compare it with the semi-classical and one-bound state constants. We took $(N_C,N_B)=(40,30)$ and the computation of the optimal potential for one value of $I$ and $\gamma$ took approximately one hour on one processor. Several points were handled simultaneously using parallel computing. 

In Figure~\ref{fig:d=2_kappa11654}, we display the curves $I\mapsto L_{\gamma, 2,\cL}(K,I)$ for $\gamma\in\{1.165300,1.165400\}$, that is, slightly below and above $\gamma_{1 \cap {\rm sc}}(2)$. We considered three different lattices $\cL$: triangular ($K=1$), square ($K=1$) and hexagonal ($K=2$). The scale in the $y$-axis is very fine and all quantities are computed to the order $10^{-7}$. The curves are computed by solving~\eqref{eq:def:M} on a grid, while the horizontal black line $L^{(1)}_{\gamma,d}/L^{\rm sc}_{\gamma,d}$ is computed by finding the positive solution of the nonlinear Schr\"odinger equation using Runge-Kutta methods. The fact that the curves get very close when $I$ increases strongly suggests that both numerical codes are valid with very high accuracy in this region. 

At  $\gamma=1.165300<\gamma_{1 \cap {\rm sc}}(2)$, for each of the three lattices we can find a value of $I$ so that the corresponding \textbf{periodic potential beats both the semiclassical and one-bound-state constants}. The triangular lattice provides the largest constant. When we increase $\gamma$, the three curves go down and end up touching the semiclassical constant slightly after $\gamma_{1\cap\rm sc}(2)$. The touching points are provided in Table~\ref{table:critical_gamma}. That the critical exponents are so close to $\gamma_{1\cap \rm sc}(2)$ could be a consequence of the exponentially small attraction between the particles~\cite{FraGonLew-20_ppt}. For the second curve in Figure~\ref{fig:d=2_kappa11654} we have $\gamma=1.165400>\gamma_{1 \cap {\rm sc}}(2)$, and only the triangular lattice is above the semiclassical constant. In Figure~\ref{fig:d=2_density} we display the periodic potentials at $\gamma=1.165400$. We note that in all cases the obtained optimal potentials had exactly $K$ negative bands.

If we double the period, which corresponds to taking $4K$ bands, we would obtain similar curves as in Figure~\ref{fig:d=2_kappa11654} with a maximum located at $I$ equal to about 4 times the values in Table~\ref{table:critical_gamma}. This maximum must be at least as high as the one we got for $K$ bands. When we allow more and more bands, we in fact know from the proof of Theorem~\ref{thm:LT_periodic} that this local maximum converges to the optimal Lieb-Thirring constant, whatever lattice $\cL$ we start with. In other words, the function $I\mapsto \max_{K\geq1}L_{\gamma,d,\cL}(K,I)$ in~\eqref{eq:def:M} is oscillating and converging to the best Lieb-Thirring constant $L_{\gamma,d}$ in the limit $I\to\ii$. In Figure~\ref{fig:d=2_kappa11654} we only display the first bump of this function. Our Conjecture~\ref{conjecture} would mean that for a certain lattice $\cL$ these maxima are all the same. It is unfortunately hard to simulate a too large number of Bloch bands since more discretization points are then needed.

To conclude, \textbf{above $\gamma_{1\cap\rm sc}(2)$ we have found a periodic potential which beats the semiclassical constant and have thus shown that $\gamma_{\rm sc}(2)\geq 1.165417$}. Even \textbf{slightly below $\gamma_{1\cap\rm sc}(2)$, periodic potentials can do better than the one-bound state constant}. This shows that periodic potentials are important for the Lieb-Thirring inequality and gives evidence to our conjecture that they are optimizers. 

\begin{figure}[!t]
    \centering
    
    \hspace{-0.1cm}\includegraphics[width=6.6cm]{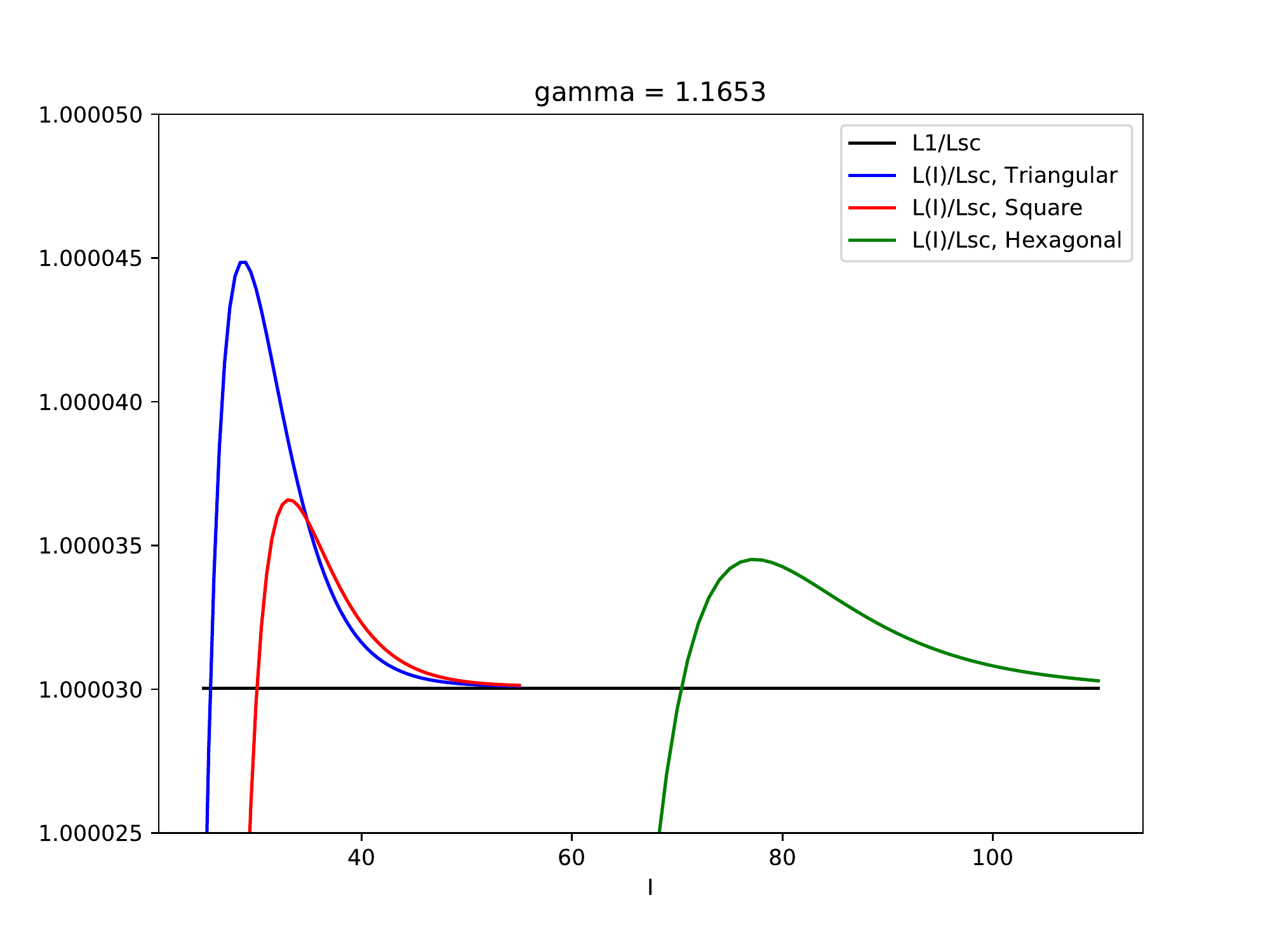}\includegraphics[width=6.6cm]{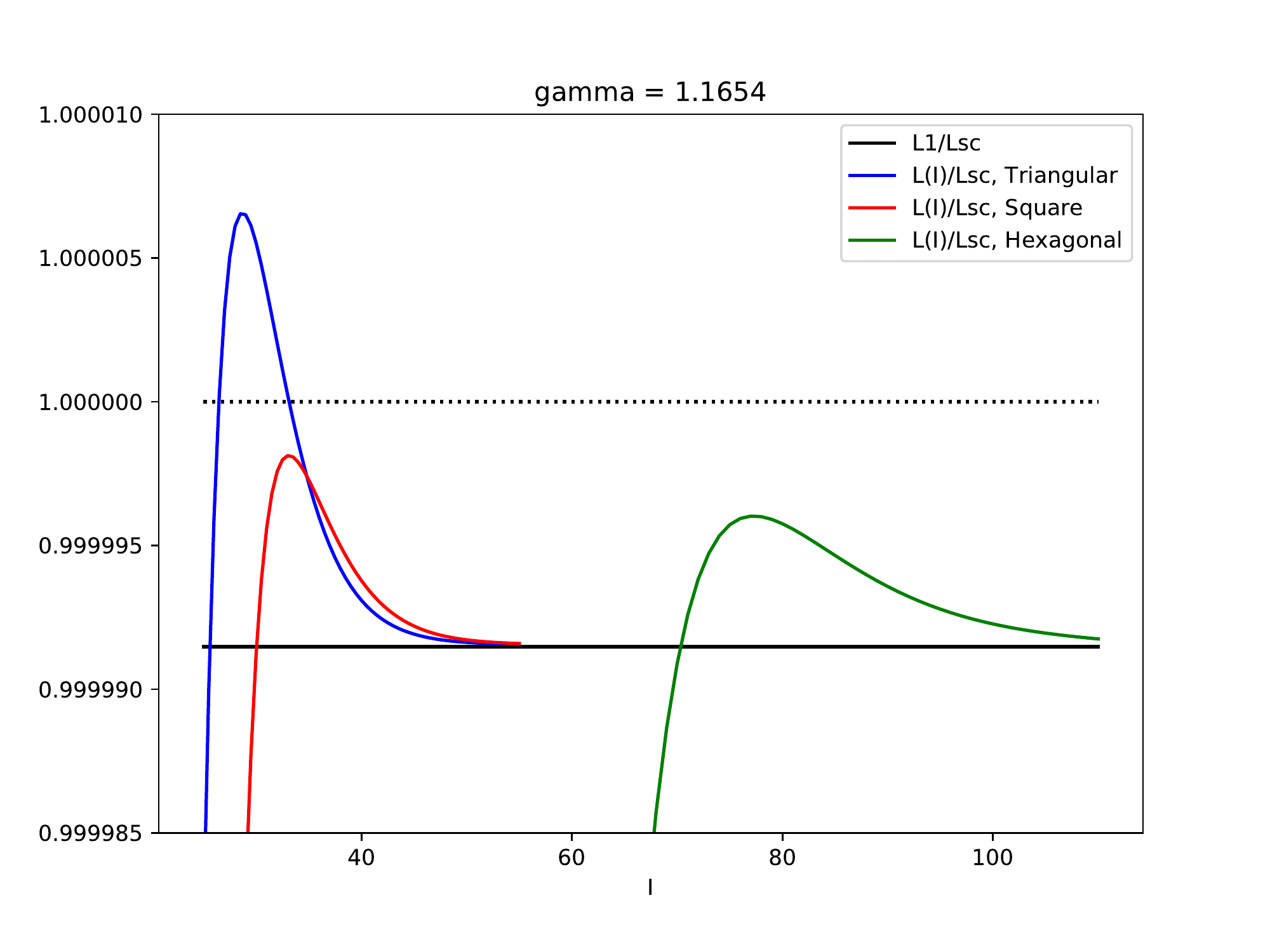}
    
    \caption{Functions $I\mapsto L_{\gamma,2,\cL}(N,I) / L_{\gamma, 2}^{\rm sc}$ for $\gamma=1.165300$ (left) and $\gamma=1.165400$ (right). The black horizontal line is the constant $L_{\gamma, 2}^{(1)}/L_{\gamma, 2}^{\rm sc}$. Note that the dotted line on the right is not the curve obtained for the constant potential $V\equiv-I$. The latter lies much further down for these values of $I$ since we only retain $K$ bands and hence do not fill the whole negative spectrum of $-\Delta-I$. 
    \label{fig:d=2_kappa11654}}
    
    \bigskip
    
    \includegraphics[height=5cm]{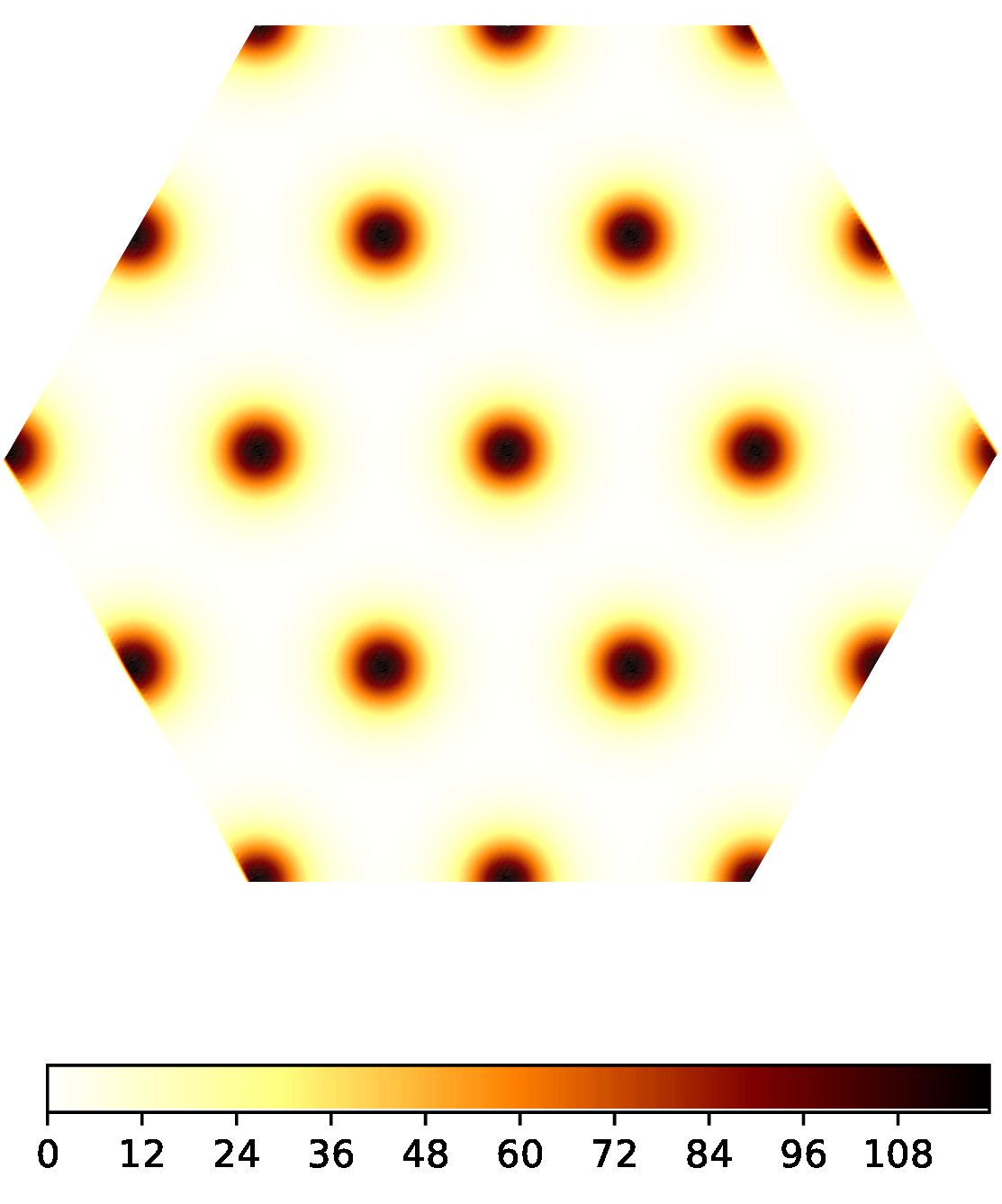}\includegraphics[height=5cm]{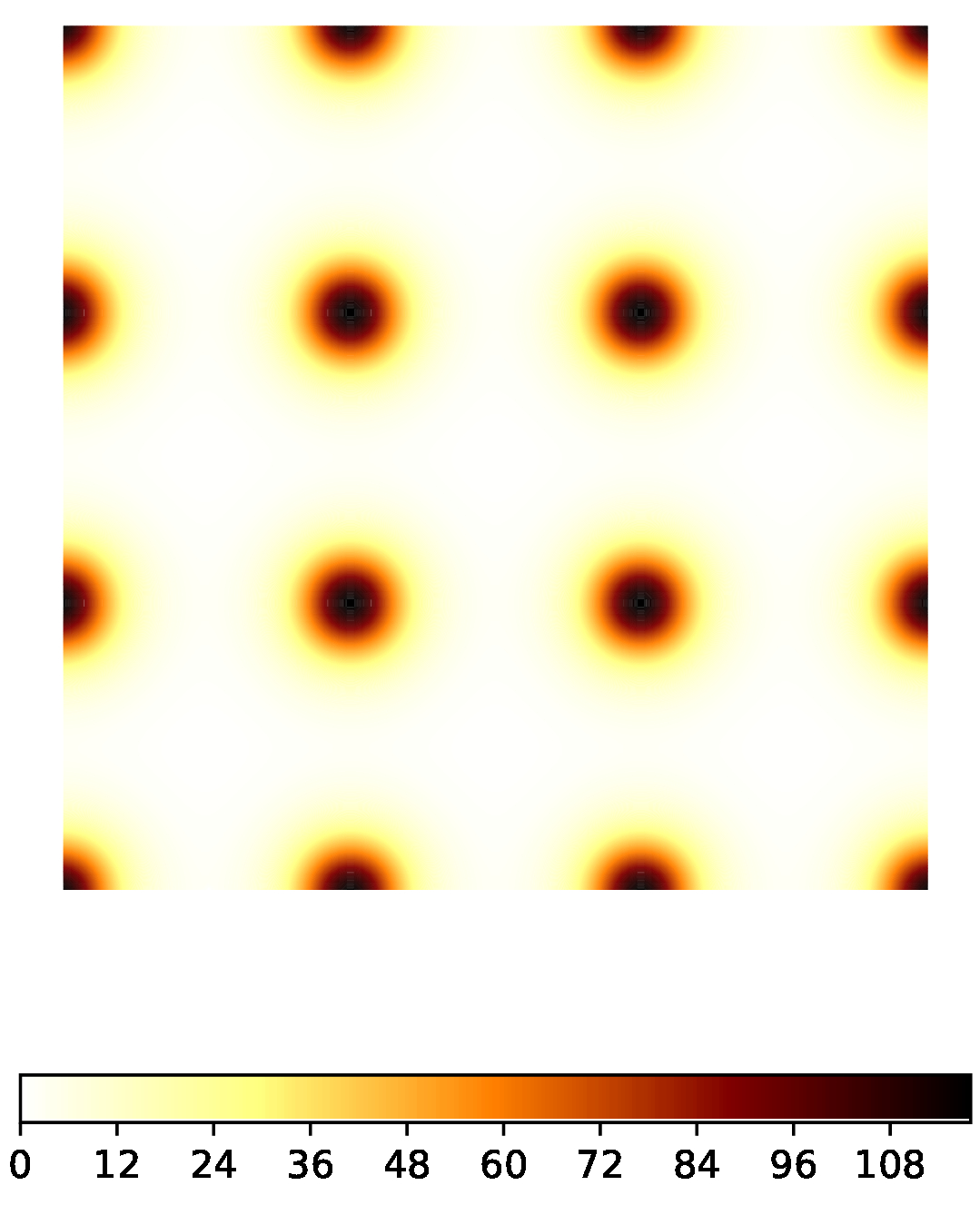}\includegraphics[height=5cm]{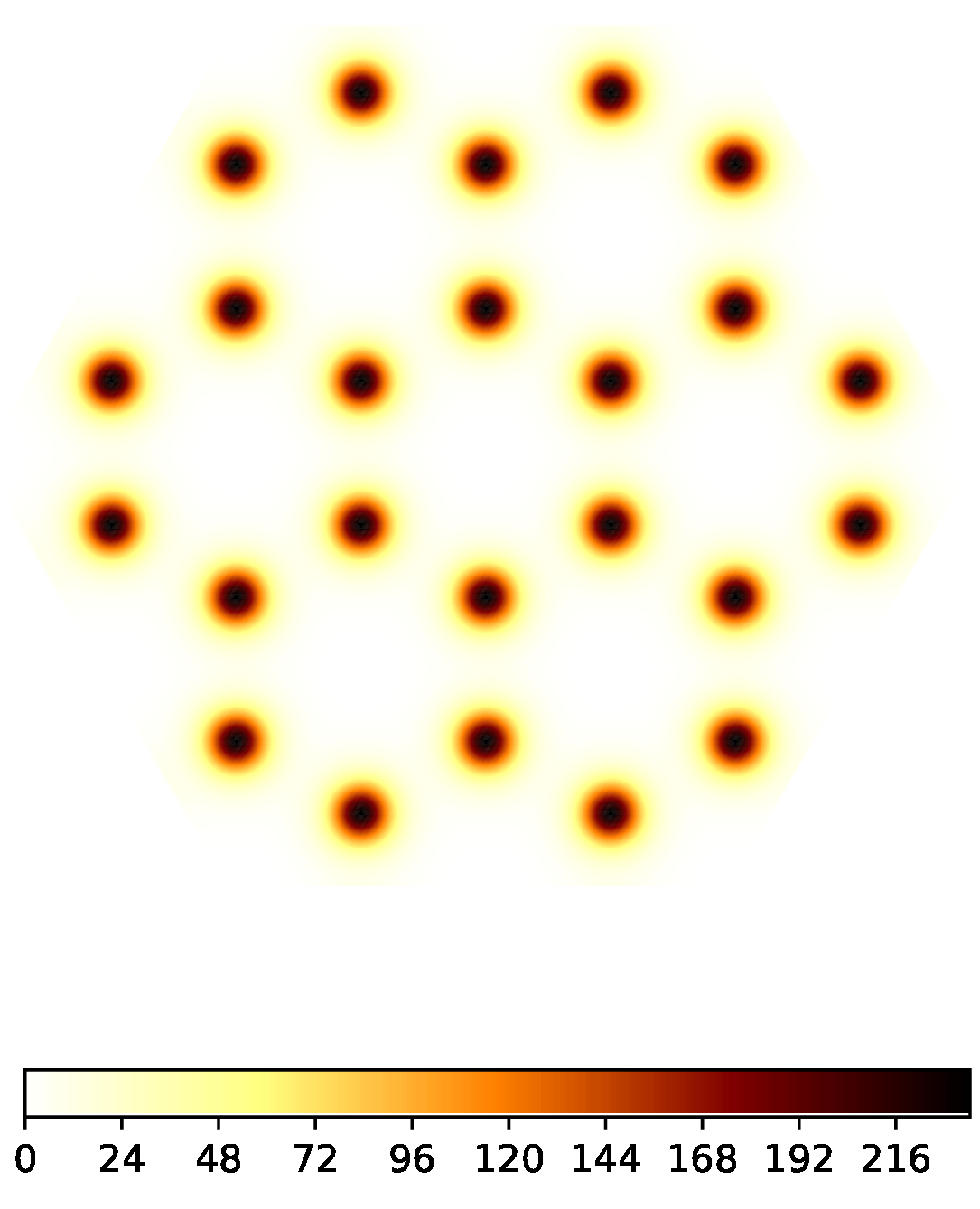}
    
    \caption{Absolute value of the optimal potential at $\gamma=1.165400$ for the triangular (left), square (center) and hexagonal lattices (right), with the corresponding best $I$.\label{fig:d=2_density}}
    \end{figure}

        \begin{table}[!t]
    \centering
        \begin{tabular}{|c|c|c|c| c |}
            \hline
            & Triangular & Square & Hexagonal & $L^{(1)}_{\gamma,2}$ \\
            \hline
            Critical $\gamma$ & 1.165417 & 1.165395 & 1.165390 & 1.165378 \\
            \hline
            Corresponding $I$ & 28.7 & 33.1 & 77.2 & -- \\
            \hline
        \end{tabular}
        \caption{Approximate critical values of $\gamma$ at which $\sup_I L_{\gamma,2,\cL}(K,I) = L_{\gamma, 2}^{\rm sc}$, for different lattices $\cL$ (with the corresponding value of $I$), and for $L_{\gamma, 2}^{(1)}$.\label{table:critical_gamma}}
\end{table}

\appendix
\section{A minimization problem with entropy}\label{app:entropy}

Consider the problem of minimizing the mean-field free energy of a (bosonic) quantum system described by its one-particle density matrix $\Gamma$ on $L^2(\R^d)$, with a local nonlinear attraction and a Tsallis-type entropy
\begin{equation}
 F_{p,d}(T)=\inf_{\Gamma=\Gamma^*\geq0}\left\{\tr(-\Delta)\Gamma-\frac1p\int_{\R^d}\Gamma(x,x)^p\,\rd x+T\tr(\Gamma^q)\right\}
 \label{eq:F_entropy}
\end{equation}
where $T$ plays the role of a temperature. We take $1\le p<1+\frac2d$ to be able to rely on Lieb-Thirring inequalities and properly set-up the problem. At $q = (2p+d-dp)/(2+d-dp)$ the problem is scaling-invariant and it follows that $F_{p,q}(T)=0$ for $T\geq T_c(p,d)$ with no minimizer for $T> T_c(p,d)$ and $F_{p,q}(T)=-\ii$ for $T<T_c(p,d)$. The critical temperature $T_c(p,d)$  is positive and finite. After scaling $\Gamma$, it can be computed in terms of the best constant in the inequality
\begin{equation}
 K_{p,d} \norm{\rho_\Gamma}_{L^p(\R^d)}^{\frac{2p}{d(p-1)}} \le  \big(\tr(\Gamma^q)\big)^{\frac{p(2-d)+d}{dq(p-1)}}\tr(-\Delta)\Gamma.
 \label{eq:LT_Schatten}
\end{equation}
The inequality stays valid for $p=1+2/d$ and $q=+\ii$ with $ \big(\tr(\Gamma^q)\big)^{1/q}$ replaced by the operator norm $\|\Gamma\|$ but in~\eqref{eq:F_entropy} this becomes a constraint $\|\Gamma\|\leq1$. For simplicity, we assume $p<1+2/d$. In~\cite{LioPau-93,FraGonLew-20_ppt}, the inequality~\eqref{eq:LT_Schatten} was shown to be dual to the Lieb-Thirring inequality~\eqref{eq:LT_V}, when $p=(\gamma+d/2)'$ and $q=\gamma'$, so that $K_{p,d}$ can be expressed in terms of $L_{\gamma,d}$. The periodic equivalent of Theorem~\ref{thm:LT_periodic} is

\begin{thm}[Periodic dual Lieb-Thirring inequality]
Let $d\geq1$ and $1<p<1+2/d$. For every bounded periodic operator $\Gamma=\Gamma^*\geq0$ we have 
\begin{equation}
 K_{p,d} \left(\fint \rho_\Gamma^p\right)^{\frac{2}{d(p-1)}} \le  \big(\VTr(\Gamma^q)\big)^{\frac{p(2-d)+d}{dq(p-1)}}\VTr(-\Delta)\Gamma,
 \label{eq:LT_Schatten_periodic}
\end{equation}
with the same optimal constant $ K_{p,d}$ as in~\eqref{eq:LT_Schatten}.
\end{thm}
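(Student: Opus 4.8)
The plan is to transport to the periodic setting the Legendre duality that relates the Lieb--Thirring inequality~\eqref{eq:LT_V} to its dual form~\eqref{eq:LT_Schatten} (see~\cite{LioPau-93,FraGonLew-20_ppt}), and then to invoke Theorem~\ref{thm:LT_periodic}. The elementary ingredient is the scalar identity $-\lambda_-^\gamma=\min_{t\ge0}\big(t\lambda+c_\gamma t^{\gamma'}\big)$ for a suitable explicit constant $c_\gamma>0$ (which makes sense since $\gamma=q'>1$). Applied fibre by fibre in the Bloch--Floquet decomposition, and using a measurable selection of the minimizers, this upgrades to
\[
-\VTr(-\Delta+V)_-^\gamma=\inf_{\Gamma=\Gamma^*\ge0\ \mathrm{periodic}}\Big(\VTr\big((-\Delta+V)\Gamma\big)+c_\gamma\VTr(\Gamma^{\gamma'})\Big),
\]
where we also use $\VTr(V\Gamma)=\fint V\,\rho_\Gamma$ for a periodic potential $V$.

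For the inequality itself, fix a bounded periodic $\Gamma=\Gamma^*\ge0$. For every periodic $V\le0$ we combine the fibrewise variational bound $\VTr\big((-\Delta+V)\Gamma\big)+c_\gamma\VTr(\Gamma^{\gamma'})\ge-\VTr(-\Delta+V)_-^\gamma$ with the periodic Lieb--Thirring inequality~\eqref{eq:LT_periodic}, which gives $-\VTr(-\Delta+V)_-^\gamma\ge-L_{\gamma,d}\fint V_-^{\gamma+d/2}$. Writing $\VTr((-\Delta+V)\Gamma)=\VTr((-\Delta)\Gamma)+\fint V\rho_\Gamma$, rearranging, and then minimizing over $V$ pointwise — the optimal choice is $V=-\mathrm{const}\cdot\rho_\Gamma^{\,p-1}$ with $p=(\gamma+d/2)'$, which is again periodic and lies in $L^{\gamma+d/2}_{\rm loc}$ as soon as $\rho_\Gamma\in L^p_{\rm loc}$ — produces a sum-form inequality of the type $c_{p,d}\fint\rho_\Gamma^p\le\VTr(-\Delta)\Gamma+c_\gamma\VTr(\Gamma^{\gamma'})$. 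Finally one optimizes over the scalings $\Gamma\mapsto t\,\Gamma$ and the spatial dilations of $\Gamma$, exactly as in the non-periodic derivation; since~\eqref{eq:LT_Schatten_periodic} is invariant under these with the same exponents, the same bookkeeping as in~\cite{LioPau-93,FraGonLew-20_ppt} turns the sum-form inequality into~\eqref{eq:LT_Schatten_periodic} with the very same constant $K_{p,d}$ (the same explicit function of $L_{\gamma,d}$). This shows $K_{p,d}^{\rm per}\ge K_{p,d}$ for the best periodic constant.

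It remains to show $K_{p,d}^{\rm per}\le K_{p,d}$. One way is to run the duality backwards: from~\eqref{eq:LT_Schatten_periodic} with constant $K_{p,d}^{\rm per}$ one recovers, by the same Legendre transform now applied to $\rho\mapsto\rho^p/p$, the periodic inequality~\eqref{eq:LT_periodic} with a constant equal to the value at $K_{p,d}^{\rm per}$ of the decreasing inverse of the duality map; since Theorem~\ref{thm:LT_periodic} identifies $L_{\gamma,d}$ as the \emph{optimal} constant in~\eqref{eq:LT_periodic} and $L_{\gamma,d}$ is the image of $K_{p,d}$ under that inverse, monotonicity forces $K_{p,d}^{\rm per}\le K_{p,d}$. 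Alternatively, and more in the spirit of the proof of Theorem~\ref{thm:LT_periodic}, one may take the entropy-optimal density matrices (fibrewise, $\Gamma_\xi\propto\big((-\Delta+V_\ell)_\xi\big)_-^{\gamma-1}$) of the narrow-band periodic operators $-\Delta+V_\ell$ with $V_\ell(x)=\sum_{z}v(x+\ell z)$, $v\in C^\infty_c(\R^d)$: these are explicit periodic operators whose per-unit-volume Schatten ratios converge, as $\ell\to\infty$, to the non-periodic ratio attached to $v$, and optimizing over $v$ then yields $K_{p,d}$.

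The routine transcription $\tr\to\VTr$, $\int\to\fint$, $\sum\to\VTr$ hides the real work. For the lower bound the delicate points are that the scalar Legendre identity integrates over the Brillouin zone (equivalently, that the entropy-minimizing $\Gamma$ has finite $\VTr$ kinetic energy and entropy when the right side is finite), that $\rho_\Gamma\in L^p_{\rm loc}$ so the optimal periodic test potential is admissible, and that the pointwise optimization over $V$ may indeed be performed inside $\fint$. For the matching upper bound, the main obstacle is to check that the reverse duality preserves the optimal constant in the periodic framework — or, in the periodization approach, that the three per-unit-volume quantities of the $\ell\Z^d$-periodized near-optimizers scale with exactly the powers of $\ell$ that make their Schatten ratios converge, which is guaranteed by the common scaling structure of~\eqref{eq:LT_V} and~\eqref{eq:LT_Schatten}.
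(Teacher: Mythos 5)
The paper states this theorem without proof --- it is presented as ``the periodic equivalent of Theorem~\ref{thm:LT_periodic}'', with the understanding that it follows by transporting to the periodic setting the duality between \eqref{eq:LT_V} and \eqref{eq:LT_Schatten} from \cite{LioPau-93,FraGonLew-20_ppt}. Your argument carries this out exactly as one would expect, and the logical structure is sound: the Legendre representation $-\lambda_-^\gamma=\min_{t\ge0}(t\lambda+c_\gamma t^{\gamma'})$ upgraded to operators via H\"older and Young for $\VTr$, combined with the periodic Lieb--Thirring inequality \eqref{eq:LT_periodic} at the optimal $L_{\gamma,d}$ and the pointwise optimization over $V\propto -\rho_\Gamma^{p-1}$ (followed by the scaling in $\Gamma$ and a dilation), gives \eqref{eq:LT_Schatten_periodic} with constant $K_{p,d}$, hence $K^{\rm per}_{p,d}\ge K_{p,d}$; and since the duality map between admissible $L$'s and admissible $K$'s is a monotone bijection (the Legendre transform is an involution here), $L^{\rm per}_{\gamma,d}=L_{\gamma,d}$ from Theorem~\ref{thm:LT_periodic} forces $K^{\rm per}_{p,d}=K_{p,d}$. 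The technical points you flag (that $\rho_\Gamma\in L^p_{\rm loc}$ when the right side of \eqref{eq:LT_Schatten_periodic} is finite, so that the optimal test potential is admissible, and the interchange of $\fint$ and the pointwise sup over $V$) are genuine but routine; they are of the same nature as in the non-periodic derivation. The only place I would push back slightly is your alternative ``periodization'' route for the upper bound: the density matrices $\Gamma_\ell\propto(-\Delta+V_\ell)_-^{\gamma-1}$ built from a Lieb--Thirring near-optimizer $v$ are not automatically near-optimizers of the \emph{dual} inequality unless $v$ also (approximately) satisfies the self-consistency $v\propto-\rho_{\Gamma_0}^{p-1}$; that step again rests on the duality of \cite{LioPau-93,FraGonLew-20_ppt}, so the ``reverse duality'' version of your argument is the cleaner and self-contained one, and it suffices.
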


Any periodic optimizer for the inequality~\eqref{eq:LT_Schatten_periodic}, if it exists, solves the nonlinear equation
$$\Gamma=\left(-\Delta-a\rho_\Gamma^{\frac1{\gamma+d/2-1}}\right)_-^{\gamma-1}$$
with $p=(\gamma+d/2-1)^{-1}$, $q=(\gamma-1)^{-1}$ and some appropriate constant $a>0$. This is the equation used in our fixed point algorithm~\eqref{eq:SCF_num}.

\section{Computation of the density of states of $V_k$}\label{app:details}

Our goal in this appendix is to prove the formula~\eqref{eq:nE} for the density of states of $V_k$. For the convenience of the reader, we will provide here all the necessary tools from the theory of elliptic functions. There is a well developed theory of periodic Schr\"odinger operators in 1D with finitely many gaps, which is closely connected to integrable systems. The simplest case is for one-gap potential like $V_k$, where everything relies on elliptic functions. Recall that there are two theories of elliptic functions due to Weierstrass and Jacobi, respectively. We have stated our main theorem in terms of Jacobi elliptic functions, but it will be more convenient to compute the density of states in the set-up of Weierstrass elliptic functions. We will do this in Section~\ref{sec:Schrodinger_1D}. In the first section we quickly review the theory of elliptic functions.

\subsection{Weierstrass theory of elliptic functions}


The whole discussion depends on two parameters $\omega_1,\omega_2\in\C\setminus\{0\}$ with $\frac{\omega_2}{\omega_1}\not\in\R$.\footnote{There are two conflicting notational conventions concerning the periods of elliptic functions. We follow here~\cite[Chap.~7]{Ahlfors} and denote by $\omega_1$ and $\omega_2$ the periods. Sometimes $\omega_1,\omega_2$ denote instead the \emph{half-periods}.} For our need, we focus on the case $\omega_1 \in \R$ (at the end, $\omega_1 = 2 K(k)$ is the period), and $\omega_2 \in i \R$. This choice simplifies some proofs, and we refer to~\cite{Ahlfors, Simon2A} for a general discussion. Associated with these two numbers is a Weierstrass $\wp$-function
$$
\wp(z) := \frac{1}{z^2} + \sum_{n\in\Z^2\setminus\{0\}} \left( \frac{1}{(z-n_1\omega_1-n_2\omega_2)^2} - \frac{1}{(n_1\omega_1+n_2\omega_2)^2} \right),
\qquad z\in\C \,.
$$
This function is meromorphic with double poles on the grid $\{n_1\omega_1+n_2\omega_2:\ n\in\Z^2 \}$, and periodic with periods $\omega_1$ and $\omega_2$. We always have $\wp(-z) = \wp(z)$, and since $\omega_1 \in \R$ and $\omega_2 \in i \R$, we also have
\begin{equation} \label{eq:wp_real_on_grid}
    \forall x \in \R, \quad \wp\left(x - n_2 \frac{\omega_2}{2}\right) \in \R \quad \text{and} \quad \wp\left(i x - n_1 \frac{\omega_1}{2}\right) \in \R.
\end{equation}
On the positively oriented rectangle 
\begin{equation} \label{eq:loop}
    \mathcal{C} := \left[0, \frac{\omega_1}{2}\right] \cup \left[\frac{\omega_1}{2}, \frac{\omega_1 + \omega_2}{2}\right] \cap \left[ \frac{\omega_1 + \omega_2}{2} , \frac{\omega_2}{2}\right] \cup \left[\frac{\omega_2}{2}, 0\right],
\end{equation}
the function $\wp$ is real-valued, continuous except at $0$, satisfies $\wp(x) \to \infty$ and $\wp(i x) \to - \infty$ as $\R \ni x \to 0$. Since $\wp$ has order $2$, for any $w \in \C$, the equation $\wp(z) = w$ has exactly two solutions in a fundamental cell. This implies first that $\wp$ is real-valued only on the grid~\eqref{eq:wp_real_on_grid}, and that $\wp$ is strictly decreasing along the loop. In particular,
\[
    e_1 > e_2 > e_3, \quad \text{with} \quad
    e_1 := \wp\left(\frac{\omega_1}{2}\right), \quad  e_2 := \wp\left(\frac{\omega_1 + \omega_2}{2}\right), \quad e_3 := \wp\left(\frac{\omega_2}{2}\right).
\]
By evenness and periodicity of $\wp$ we also have $ \wp'\left(\frac{\omega_1}{2}\right) =  \wp'\left(\frac{\omega_1 + \omega_2}{2}\right) =  \wp'\left(\frac{\omega_2}{2}\right) = 0$. Since $\wp'$ is of order $3$, these are the only roots of $\wp'$ in the fundamental cell. We set
$$
g_2 = 60 \sum_{n\in\Z^2\setminus\{0\}} \frac{1}{(n_1\omega_1+n_2\omega_2)^4}
\qquad\text{and}\qquad
g_3 = 140 \sum_{n\in\Z^2\setminus\{0\}} \frac{1}{(n_1\omega_1+n_2\omega_2)^6} \,. 
$$
Comparing the expansions of $\wp'$ and $\wp$ near the poles, we obtain
\begin{equation}
\label{eq:wpdiffeq}
\wp'(z)^2 = 4\wp(z)^3 - g_2\wp(z) - g_3 = 4 ( \wp(z)  - e_1) (\wp(z) - e_2)(\wp(z) - e_3),
\end{equation}
where we used in the last equality that $\wp'$ vanishes at $\frac{\omega_1}{2}$, $\frac{\omega_1 + \omega_2}{2}$ and $\frac{\omega_2}{2}$. Together with the fact that $\wp$ is decreasing along the loop, this gives by separation of variables that
\begin{equation} \label{eq:x-x0}
a-a_0 =  i \int_{\wp(a_0)}^{\wp(a)} \frac{\rd w}{\sqrt{4(e_1-w)(e_2-w)(e_3-w)}}
\quad\text{for all}\ a,a_0\in (0, \tfrac12 \omega_2].
\end{equation}
We have similar formulae on the other parts of the loop $\mathcal{C}$, that we omit for brevity. \\

Before we turn to the link between the Weierstrass $\wp$ function and the Schr\"odinger equation, we need two more functions. The first one is the Weierstrass zeta function
$$
\zeta(z) := \frac 1z + \sum_{n\in\Z^2\setminus\{0\}} \left(\frac{1}{z-n_1\omega_1-n_2\omega_2} + \frac{1}{n_1\omega_1+n_2\omega_2} + \frac{z}{(n_1\omega_1+n_2\omega_2)^2} \right),
\quad z\in\C .
$$
It has which has simple poles at $\{n_1\omega_1+n_2\omega_2:\ n\in\Z^2 \}$ and satisfies
\begin{eqnarray}
\label{eq:zetadiff}
\zeta'(z)=-\wp(z) \,.
\end{eqnarray}
We have $\zeta(-z) = -\zeta(z)$, $\zeta(z+\omega_1) = \zeta(z)+\eta_1$ and $\zeta(z+\omega_2) = \zeta(z)+\eta_2$, for
\begin{equation}
\label{eq:eta}
\eta_1 := 2\zeta\left(\frac{\omega_1}2\right)
\qquad\text{and}\qquad
\eta_2 := 2 \zeta\left(\frac{\omega_2}2\right).
\end{equation}
In our case where $\omega_1 \in \R$ and $\omega_2 \in i \R$, we get from~\eqref{eq:zetadiff} and the fact that $\wp$ is real-valued on the loop $\mathcal{C}$, that $\zeta \left( x \right) \in \R$ while $ \zeta \left( i x \right) \in i \R$ for all $x \in \R$. This shows for instance that $\eta_1 \in \R$ and $\eta_2 \in i \R$. Moreover, one can show~\cite{Ahlfors} that
\begin{eqnarray}
\label{eq:legendre}
\eta_1 \omega_2 - \eta_2\omega_1 = 2\pi i.
\end{eqnarray}
As in~\eqref{eq:x-x0}, using~\eqref{eq:zetadiff} we see that
\begin{equation} \label{eq:zetax-zetax0}
  \zeta(a)-\zeta(a_0) = - i \int_{\wp(a_0)}^{\wp(a)} \frac{w\,\rd w}{\sqrt{4(e_1 - w)( e_2 - w)(e_3 - w)}}, \quad
  \text{for all}\ a,a_0\in (0, \tfrac12 \omega_2].
\end{equation}
The last function to be introduced is
$$
\sigma(z) := z \prod_{n\in\Z^2\setminus\{0\}} \left( 1 - \frac{z}{n_1\omega_1+n_2\omega_2} \right) e^{\frac{z}{n_1\omega_1+n_2\omega_2} + \frac12\,\frac{z^2}{(n_1\omega_1+n_2\omega_2)^2}} \,,
\qquad z\in\C \,.
$$
This is an odd entire function which satisfies
\begin{equation}
\label{eq:sigmadiff}
\frac{\sigma'(z)}{\sigma(z)} = \zeta(z) \,,\quad 
\sigma(z+\omega_1) = -\sigma(z) e^{\eta_1(z+\omega_1/2)},
\quad
\sigma(z+\omega_2) = -\sigma(z) e^{\eta_2(z+\omega_2/2)} \,.
\end{equation}

\medskip

\subsection{The Schr\"odinger equation}\label{sec:Schrodinger_1D}

We let $\omega_1\in (0,\infty)$ and $\omega_2\in i(0,\infty)$ as before, and denote by $\wp$ the corresponding Weierstrass function. We consider the potential
$$
W(x) := 2\,\wp\left(x+\tfrac12\omega_2\right) \,,
\qquad x\in\R \,.
$$
It has (real) period $\omega_1$, is real-valued and real analytic (since the line $\R+\omega_2/2$ does not hit the poles of $\wp$). Moreover, it is even and symmetric about $x=\frac12\omega_1$. It is strictly increasing on $[0, \tfrac{1}{2}\omega_1]$ with $W(0) = 2e_3$ and $W(\tfrac12\omega_1)=2e_2$. We study the (periodic) Schr\"odinger equation in the form
\begin{eqnarray}
\label{eq:soeq}
\begin{cases}
-\psi'' + W\psi = E \psi
\qquad\text{on}\ \R \,,\\
E= - \wp(a).
\end{cases}
\end{eqnarray}
Here, $a \in \C$ is any parameter. Soon we will require $E \in \R$, in which case $a$ must belong to the loop $\mathcal{C}$ defined in~\eqref{eq:loop}. As $a$ runs positively through this loop, $E$ goes from $-\infty$ to $\infty$.
Let
$$
\psi_\pm(x) :=  \frac{\sigma(x+\frac12\omega_2\pm a)}{\sigma(x+\frac12\omega_2)} e^{\mp (x+\frac12\omega_2)\zeta(a)}\,.
$$

\begin{lem}
	The functions $\psi_\pm$ solve \eqref{eq:soeq}. If $a$ belongs to the fundamental domain and $a\not\in\left\{0,\frac12\omega_1,\frac12\omega_2,\frac12(\omega_1+\omega_2)\right\}$, then $\psi_+$ and $\psi_-$ are linearly independent.
\end{lem}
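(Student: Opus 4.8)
The plan is to verify \eqref{eq:soeq} by computing the logarithmic derivative of $\psi_\pm$, and to obtain linear independence from the (necessarily constant) Wronskian. I would first set $y:=x+\tfrac12\omega_2$, so that $W(x)=2\wp(y)$ and \eqref{eq:soeq} amounts to $\psi_\pm''/\psi_\pm=2\wp(y)+\wp(a)$. Since $\psi_-$ is obtained from $\psi_+$ by the substitution $a\mapsto-a$ (using $\zeta(-a)=-\zeta(a)$), and since $\wp$ is even so that $E=-\wp(a)$ is unchanged, it suffices to treat $\psi_+$.

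By $\sigma'/\sigma=\zeta$ from \eqref{eq:sigmadiff}, $\psi_+'/\psi_+=\zeta(y+a)-\zeta(y)-\zeta(a)$, hence by \eqref{eq:zetadiff} $(\psi_+'/\psi_+)'=\wp(y)-\wp(y+a)$. For the square I would invoke the classical addition theorems
\[
\zeta(u+v)-\zeta(u)-\zeta(v)=\tfrac12\,\frac{\wp'(u)-\wp'(v)}{\wp(u)-\wp(v)},\qquad
\wp(u+v)=\tfrac14\left(\frac{\wp'(u)-\wp'(v)}{\wp(u)-\wp(v)}\right)^2-\wp(u)-\wp(v)
\]
(both consequences of \eqref{eq:wpdiffeq} obtained by comparing elliptic functions with the same principal parts; see \cite[Chap.~7]{Ahlfors}). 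These give $(\psi_+'/\psi_+)^2=\wp(y)+\wp(y+a)+\wp(a)$, and adding the two identities yields $\psi_+''/\psi_+=(\psi_+'/\psi_+)'+(\psi_+'/\psi_+)^2=2\wp(y)+\wp(a)$, which is precisely \eqref{eq:soeq}.

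For linear independence, the Wronskian $\mathcal W:=\psi_+\psi_-'-\psi_+'\psi_-$ is constant because \eqref{eq:soeq} has no first-order term, so it suffices to show $\mathcal W\neq0$. Writing $\mathcal W=\psi_+\psi_-\big((\log\psi_-)'-(\log\psi_+)'\big)$, the exponentials cancel in $\psi_+\psi_-=\sigma(y+a)\sigma(y-a)/\sigma(y)^2$, which equals $-\sigma(a)^2(\wp(y)-\wp(a))$ by the standard identity $\wp(z)-\wp(w)=-\sigma(z+w)\sigma(z-w)/(\sigma(z)^2\sigma(w)^2)$, while $(\log\psi_-)'-(\log\psi_+)'=\zeta(y-a)-\zeta(y+a)+2\zeta(a)$ simplifies, again by the $\zeta$-addition theorem, to $\wp'(a)/(\wp(y)-\wp(a))$. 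Hence $\mathcal W=-\sigma(a)^2\wp'(a)$, which is nonzero exactly when $\sigma(a)\neq0$ and $\wp'(a)\neq0$. Since $\sigma$ vanishes only on the period lattice, and (as already recorded above) $\wp'$ vanishes in the fundamental cell only at the half-periods $\tfrac12\omega_1,\tfrac12\omega_2,\tfrac12(\omega_1+\omega_2)$, this holds precisely under the stated hypothesis $a\notin\{0,\tfrac12\omega_1,\tfrac12\omega_2,\tfrac12(\omega_1+\omega_2)\}$.

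The computation is routine once these elliptic identities are in hand, so there is no serious obstacle; the only care needed is the bookkeeping of signs and keeping $y$ on the line $\R+\tfrac12\omega_2$, which avoids the poles of $\wp,\zeta$ and the zeros of $\sigma$ and thereby makes $\psi_\pm$ honest real-analytic solutions on $\R$. If one prefers not to quote the addition theorems, an alternative is a Liouville argument: the quasi-periodicity relations in \eqref{eq:sigmadiff} show that $\psi_+$ (as a function of $y$) satisfies $\psi_+(y+\omega_j)=c_j\psi_+(y)$ for constants $c_j$, so $R:=(\psi_+''-(2\wp(y)+\wp(a))\psi_+)/\psi_+$ is elliptic; local expansions (from $\sigma(z)=z+O(z^5)$, $\zeta(z)=z^{-1}+O(z^3)$, $\wp(z)=z^{-2}+O(z^2)$) show $R$ is regular at $y=0$ and at the simple zero $y=-a$ of $\psi_+$ and vanishes as $y\to0$, whence $R\equiv0$ — the one delicate point of that route being the cancellation of the would-be pole of $R$ at $y=-a$.
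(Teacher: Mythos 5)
Your proof is correct and takes essentially the same route as the paper: compute the logarithmic derivative of $\psi_\pm$ via $\sigma'/\sigma=\zeta$, reduce $\psi_\pm''/\psi_\pm$ to an elliptic identity (you invoke the $\zeta$- and $\wp$-addition theorems; the paper cites the equivalent identity $(\zeta(u)+\zeta(v)+\zeta(w))^2=-(\zeta'(u)+\zeta'(v)+\zeta'(w))$ for $u+v+w=0$), and evaluate the Wronskian to $-\sigma(a)^2\wp'(a)$ using the standard $\sigma$- and $\zeta$-identities. The steps, intermediate expressions, and final Wronskian value all match the paper's argument.
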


We will see that the excluded values of $a$ give the boundary of the spectrum.

\begin{proof}
	We have, using \eqref{eq:sigmadiff},
	\begin{align*}
	\psi_\pm'(x) & = e^{\mp (x+\frac12\omega_2)\zeta(a)} \bigg( \mp \zeta(a) \frac{\sigma(x+\frac12\omega_2\pm a)}{\sigma(x+\frac12\omega_2)} + \frac{\sigma'(x+\frac12\omega_2\pm a)}{\sigma(x+\frac12\omega_2)}\\
	&\qquad - \frac{\sigma(x+\frac12\omega_2\pm a)\sigma'(x+\frac12\omega_2)}{\sigma(x+\frac12\omega_2)^2} \bigg) \\
	& = \psi_\pm(x) \left( \mp \zeta(a) + \zeta(x+\tfrac12\omega_2\pm a) - \zeta(x+\tfrac12\omega_2) \right)
	\end{align*}
	and this gives 
	\begin{align*}
	\psi_\pm''(x) 
	& = \psi_\pm(x) \bigg\{ \left( \mp \zeta(a) + \zeta(x+\tfrac12\omega_2\pm a) - \zeta(x+\tfrac12\omega_2) \right)^2\\ 
	&\qquad + \left( \zeta'(x+\tfrac12\omega_2\pm a) - \zeta'(x+\tfrac12\omega_2) \right) \bigg\}.
	\end{align*}
	Thus, to prove the lemma, we need to show that the term in the curly bracket equals $2\wp(x+\tfrac12 \omega_2) +\wp(a)$. According to \eqref{eq:zetadiff}, this is the same as
	$$
	\left( \mp \zeta(a) + \zeta(z\pm a) - \zeta(z) \right)^2 
	+ 	\zeta'(z\pm a) + \zeta'(z) + \zeta'(a) = 0
	$$
	for all $z=x+\frac12\omega_2$. Using the oddness of $\zeta$ we rewrite the quantity under the square as
	$
	\mp \zeta(a) + \zeta(x\pm a) - \zeta(x) = \zeta(\mp a) + \zeta(x\pm a) + \zeta(-x)
	$
	and note that the three numbers involved in the right side satisfy
	$(\mp a) + (z\pm a) + (-z) = 0$. Therefore, by~\cite[Exercise 5, Sec.~10.45]{Simon2A} and the evenness of $\zeta$, we have
	\begin{align*}
	\left( \zeta(\mp a) + \zeta(z\pm a) + \zeta(-z) \right)^2 & = - \left( \zeta'(\mp a) + \zeta'(z\pm a) + \zeta'(-z) \right)\\
	&=-\left(\zeta'(a) + \zeta'(z\pm a) + \zeta'(z)\right),
	\end{align*}
	which proves the claimed identity. In order to show that the two functions are linearly independent, we compute their Wronskian, using the above formulas for $\psi_\pm'$:
	\begin{multline*}
	\psi_+\psi_-' - \psi_-\psi_+' 
	\\ = \left( 2 \zeta(a) +\zeta(x+\tfrac12\omega_2-a) - \zeta(x+\tfrac12\omega_2+a) \right)\frac{\sigma(x+\frac12\omega_2+a)\sigma(x+\frac12\omega_2- a)}{\sigma(x+\frac12\omega_2)^2}
	\,.
	\end{multline*}
	According to~\cite[Eq.~(10.4.90)]{Simon2A}  we have
	$$
	\frac{\sigma(x+\frac12\omega_2+a)\sigma(x+\frac12\omega_2- a)}{\sigma(x+\frac12\omega_2)^2} = \left(\wp(a)-\wp(x+\tfrac12\omega_2)\right)\sigma(a)^2
	$$
	and according to oddness of $\zeta$ and~\cite[Eq.~(10.4.91)]{Simon2A}, we have
	\begin{multline*}
	2 \zeta(a) +\zeta(x+\tfrac12\omega_2-a) - \zeta(x+\tfrac12\omega_2+a)\\
	 = 2 \zeta(a) - \zeta(a - (x+\tfrac12\omega_2)) - \zeta(a+(\tfrac12\omega_2)) 
	 = - \frac{\wp'(a)}{\wp(a)-\wp(x+\tfrac12\omega_2)} \,.
	\end{multline*}
	Thus we find $\psi_+\psi_-' - \psi_-\psi_+' = - \sigma(a)^2 \wp'(a)$. Since $\sigma$ does not vanish in the fundamental domain except at the origin (this follows from the product formula) and since $\wp'$ vanishes exactly at the excluded points, we see that the Wronskian is nonzero for $a$ in the fundamental domain and different from the excluded values.
\end{proof}

\paragraph{The dispersion relation.} 
From \eqref{eq:sigmadiff} we have $\psi_\pm(x+\omega_1) = e^{\mp\omega_1\zeta(a)\pm \eta_1 a} \psi_\pm(x)$.  Setting
\begin{equation} \label{eq:xi(a)}
    \xi_\pm := \mp i (\zeta(a)-\frac{\eta_1}{\omega_1}a), \quad \text{modulo $2 \pi$,}
\end{equation}
we have $\psi_\pm(x+\omega_1) := e^{-i\omega_1 \xi_\pm} \psi_\pm(x)$, so $\psi_\pm$ are Bloch--Floquet solutions for the energy $E = \wp(a)$ whenever $\xi_\pm \in \R$. Since we want $E \in \R$, we need to take $a \in \mathcal{C}$. Together with~\eqref{eq:zetadiff}, we can see that $\xi_\pm \in \R$ if and only if $a$ belongs to the vertical segments $a \in (0, \frac12 \omega_2] \cup [\tfrac12 \omega_1, \tfrac12 (\omega_1+ \omega_2)]$. This already proves that $-\partial_x^2 + W$ has spectrum $[-e_1, -e_2] \cup [-e_3, \infty)$, and in particular has a unique gap. 

\medskip

Formula~\eqref{eq:xi(a)} gives an implicit parametrization of the dispersion curves. We will now use the integral representations to rewrite $\xi_\pm$ in terms of $E$ instead of $a$. First consider $a \in (0, \tfrac12 \omega_2]$, that is $E \in (\infty, -e_3]$. We set $a_0 := \tfrac12 \omega_2$, and get from~\eqref{eq:x-x0}-\eqref{eq:zetax-zetax0} that
\[
    \zeta(a) - \frac{\eta_1}{\omega_1} a  = \left( \zeta(a_0) - \frac{\eta_1}{\omega_1}a_0\right) - i \int_{\wp(a_0)}^{\wp(a)} \dfrac{(w + \tfrac{\eta_1}{\omega_1}) \rd w}{\sqrt{4 (e_1 - w)(e_2 - w)(e_3 - w)}}.
\]
By~\eqref{eq:eta} and~\eqref{eq:legendre}, the first term is $-i \tfrac{\pi}{\omega_1}$. This gives,
\[
   \forall E \in [-e_3, \infty), \quad  \xi_{\pm}(E) = \mp \int_{e_3}^{-E} \dfrac{(w + \tfrac{\eta_1}{\omega_1}) \rd w}{\sqrt{4 (e_1 - w)(e_2 - w)(e_3 - w)}} \mp \frac{\pi}{\omega_1}.
\]

For $a \in [\tfrac12 \omega_1, \tfrac12 (\omega_1 + \omega_2)]$, that is $E \in [-e_1, -e_2]$, we choose $a_0 = \tfrac12 \omega_2$. This time, we have $\zeta(a_0) - \frac{\eta_1}{\omega_1} a_0 = 0$. Performing a similar reasoning, we obtain
\[
    \forall E \in (-e_1, -e_2), \quad \xi_{\pm}(E) =  \int_{-E}^{e_1} \frac{(w+\frac{\eta_1}{\omega_1})\,\rd w}{\sqrt{4(e_1-w)(w-e_2)(w-e_3)}}.
\]

\noindent\textbf{The density of states.}
We have already shown that the spectrum of $-\partial_x^2+W$ in $L^2(\R)$ is $[-e_1,-e_2]\cup[-e_3,+\infty)$. We now compute the \emph{integrated} density of states, which is equal to $N_W(E) = \pi^{-1} \xi_\pm(E)$ for $E\in\sigma(-\partial_x^2+W)$, where the sign is chosen in such a way that $N_W$ is increasing. Using the above formulas we obtain
\begin{equation*}
N_W(E) =
\begin{cases}
0 & \text{if}\ E \in(-\infty,-e_1] \,,\\
(2\pi)^{-1} \int_{-E}^{e_1} \frac{(w+\frac{\eta_1}{\omega_1})\,\rd w}{\sqrt{(e_1-w)(w-e_2)(w-e_3)}} & \text{if}\ E\in[-e_1,-e_2] \,, \\
\omega_1^{-1} & \text{if}\ E\in[-e_2,-e_3] \,,\\
\omega_1^{-1} + (2\pi)^{-1} \int_{-E}^{e_3} \frac{(w+\frac{\eta_1}{\omega_1})\,\rd w}{\sqrt{(e_1-w)(e_2-w)(e_3-w)}} & \text{if}\ E\in[-e_3,+\infty) \,.
\end{cases}
\end{equation*}
Taking a derivative gives the sought-after density of states
\begin{equation}
n_W(E)=N_W'(E) =
\begin{cases}
(2\pi)^{-1} \frac{-E+\frac{\eta_1}{\omega_1}}{\sqrt{(e_1+E)(-E-e_2)(-E-e_3)}} & \text{if}\ E\in(-e_1,-e_2) \,, \\
(2\pi)^{-1} \frac{E-\frac{\eta_1}{\omega_1}}{\sqrt{(e_1+E)(e_2+E)(e_3+E)}} & \text{if}\ E\in(-e_3,+\infty) \,,\\
0 & \text{otherwise.}
\end{cases}
\label{eq:DOS_W}
\end{equation}


\noindent\textbf{Passing to Jacobi elliptic functions.} We finally relate the potential $W$ with the potential $V_k$ of Theorem~\ref{thm:1D}. From~\cite[8.169.1]{GraRyz}, we have
$$
\wp(z) = e_3 + \frac{e_1-e_3}{\sn(\sqrt{e_1-e_3}z|k)^2}, \quad \text{with} \quad k = \sqrt{\frac{e_2 - e_3}{e_1 - e_3}}.
$$
 Let $0 < k < 1$, and choose
\[
\omega_1 := 2 K(k) \in \R^+ \quad \text{and} \quad \omega_2 := 2 i K'(k) \in i \R^+.
\]
Using $k \sn(u|k)\sn(u+iK'(k)|k) =1$ (see~\cite[10.5.115]{Simon2A}), we get
\[
    W(x) = 2 \wp(x + \tfrac12 \omega_2) = 2 e_3 + 2 (e_1 -e_3) k^2 \sn^2 ( \sqrt{e_1 - e_3} x | k)^2.
\]
It remains to find the values of $e_1$, $e_2$ and $e_3$. First, since $x\mapsto \sn(x|k)^2$ has period $2K(k)$, while $W$ has period $\omega_1$, we deduce that $\sqrt{e_1 - e_3} = 1$. In addition, we always have $e_1 + e_2 + e_3 = 0$, because the coefficient in front of $w^2$ of the polynomial $4w^3-g_2w-g_3$ vanishes. So
\[
    \begin{cases}
        e_1 - e_3 = 1, \\
        k^2(e_1 - e_3) = (e_2 - e_3), \\
        e_1 + e_2 + e_3 = 0,
    \end{cases}
    \quad \text{which gives} \quad
    \begin{cases}
    e_1 = 1 - \tfrac13 (1+k^2), \\
    e_2 = k^2 - \tfrac13 (1+k^2), \\
    e_3 = - \tfrac13 (1 + k^2).
    \end{cases}
\]
Altogether, this proves that
\[
    W(x) = - \tfrac23 (1 + k^2) + 2 k^2 \sn^2(x | k)^2 = V_k(x) + \tfrac13 (1 + k^2).
\]
Since the operator $(-\partial_x^2 + W)$ has spectrum $[-e_1, -e_2] \cup [-e_3, \infty)$, the operator $(-\partial_x^2 + V_k)$ has spectrum $[-1, -k^2] \cup [0, \infty)$, as wanted. The claimed formula~\eqref{eq:nE} for $n(E)$ easily follows from~\eqref{eq:DOS_W}.

%

\begin{acknowledgments}
This project has received funding from the U.S. National Science Foundation (grant agreements DMS-1363432 and DMS-1954995 of R.L.F.) and from the European Research Council (ERC) under the European Union's Horizon 2020 research and innovation programme (grant agreement MDFT 725528 of M.L.).
\end{acknowledgments}

\small


\begin{thebibliography}{10}

\bibitem{Ahlfors}
{\sc L.~V. Ahlfors}, {\em Complex analysis}, McGraw-Hill Book Co., New York,
  third~ed., 1978.
\newblock International Series in Pure and Applied Mathematics.

\bibitem{AizLie-78}
{\sc M.~Aizenman and E.~H. Lieb}, {\em On semiclassical bounds for eigenvalues
  of {S}chr\"{o}dinger operators}, Phys. Lett. A, 66 (1978), pp.~427--429.

\bibitem{BlaLew-15}
{\sc X.~Blanc and M.~Lewin}, {\em The crystallization conjecture: A review},
  EMS Surv. Math. Sci., 2 (2015), pp.~219--306.

\bibitem{Cwikel-77}
{\sc M.~Cwikel}, {\em Weak type estimates for singular values and the number of
  bound states of {S}chrodinger operators}, Ann. of Math., 106 (1977), pp.~pp.
  93--100.

\bibitem{Daumer-93}
{\sc F.~Daumer}, {\em \'{E}quation de {S}chr\"{o}dinger avec champ
  \'{e}lectrique p\'{e}riodique et champ magn\'{e}tique constant dans
  l'approximation du tight binding}, Comm. Partial Differential Equations, 18
  (1993), pp.~1021--1041.

\bibitem{DoiIwaMin-01}
{\sc S.-i. Doi, A.~Iwatsuka, and T.~Mine}, {\em The uniqueness of the
  integrated density of states for the {S}chr\"{o}dinger operators with
  magnetic fields}, Math. Z., 237 (2001), pp.~335--371.

\bibitem{DubNov-74}
{\sc B.~A. Dubrovin and S.~P. Novikov}, {\em Periodic and conditionally
  periodic analogs of the many-soliton solutions of the {K}orteweg-de {V}ries
  equation}, \v{Z}. \`Eksper. Teoret. Fiz., 67 (1974), pp.~2131--2144.

\bibitem{Frank-20_ppt}
{\sc R.~L. Frank}, {\em The {L}ieb--{T}hirring inequality: {R}ecent results and
  open problems}, arXiv:2007.09326,  (2020).

\bibitem{FraGonLew-20_ppt}
{\sc R.~L. Frank, D.~Gontier, and M.~Lewin}, {\em The nonlinear {S}chr\"odinger
  equation for orthonormal functions {II}. {A}pplication to {L}ieb-{T}hirring
  inequalities}, Comm. Math. Phys., in press (2020).

\bibitem{GraRyz}
{\sc I.~S. Gradshteyn and I.~M. Ryzhik}, {\em Table of integrals, series, and
  products}, Elsevier/Academic Press, Amsterdam, eighth~ed., 2015.
\newblock Translated from the Russian.

\bibitem{HunLieTho-98}
{\sc D.~Hundertmark, E.~H. Lieb, and L.~E. Thomas}, {\em A sharp bound for an
  eigenvalue moment of the one-dimensional {S}chr\"{o}dinger operator}, Adv.
  Theor. Math. Phys., 2 (1998), pp.~719--731.

\bibitem{Ince-40b}
{\sc E.~L. Ince}, {\em Further investigations into the periodic {L}am\'{e}
  functions}, Proc. Roy. Soc. Edinburgh, 60 (1940), pp.~83--99.

\bibitem{KorVri-95}
{\sc D.~J. Korteweg and G.~de~Vries}, {\em {XLI}. {O}n the change of form of
  long waves advancing in a rectangular canal, and on a new type of long
  stationary waves}, Lond. Edinb. Dubl. Phil. Mag., 39 (1895), pp.~422--443.

\bibitem{LapWei-00}
{\sc A.~Laptev and T.~Weidl}, {\em Sharp {L}ieb-{T}hirring inequalities in high
  dimensions}, Acta Math., 184 (2000), pp.~87--111.

\bibitem{Levitt-14}
{\sc A.~Levitt}, {\em Best constants in {L}ieb-{T}hirring inequalities: a
  numerical investigation}, J. Spectr. Theory, 4 (2014), pp.~153--175.

\bibitem{Lieb-76b}
{\sc E.~H. Lieb}, {\em Bounds on the eigenvalues of the {L}aplace and
  {S}chroedinger operators}, Bull. Amer. Math. Soc., 82 (1976), pp.~751--753.

\bibitem{LieSei-09}
{\sc E.~H. Lieb and R.~Seiringer}, {\em The {S}tability of {M}atter in
  {Q}uantum {M}echanics}, Cambridge Univ. Press, 2010.

\bibitem{LieThi-75}
{\sc E.~H. Lieb and W.~E. Thirring}, {\em Bound on kinetic energy of fermions
  which proves stability of matter}, Phys. Rev. Lett., 35 (1975), pp.~687--689.

\bibitem{LieThi-76}
\leavevmode\vrule height 2pt depth -1.6pt width 23pt, {\em Inequalities for the
  moments of the eigenvalues of the {S}chr{\"o}dinger hamiltonian and their
  relation to {S}obolev inequalities}, Studies in Mathematical Physics,
  Princeton University Press, 1976, pp.~269--303.

\bibitem{LioPau-93}
{\sc P.-L. Lions and T.~Paul}, {\em Sur les mesures de {W}igner}, Rev. Mat.
  Iberoamericana, 9 (1993), pp.~553--618.

\bibitem{MagWin-66}
{\sc W.~Magnus and S.~Winkler}, {\em Hill's equation}, Interscience Tracts in
  Pure and Applied Mathematics, No. 20, Interscience Publishers John Wiley \&
  Sons\, New York-London-Sydney, 1966.

\bibitem{Nakamura-01}
{\sc S.~Nakamura}, {\em A remark on the {D}irichlet-{N}eumann decoupling and
  the integrated density of states}, J. Funct. Anal., 179 (2001), pp.~136--152.

\bibitem{ReeSim4}
{\sc M.~Reed and B.~Simon}, {\em Methods of {M}odern {M}athematical {P}hysics.
  {IV}. {A}nalysis of operators}, Academic Press, New York, 1978.

\bibitem{Rozenbljum-72}
{\sc G.~V. Rozenblum}, {\em Distribution of the discrete spectrum of singular
  differential operators}, Dokl. Akad. Nauk SSSR, 202 (1972), pp.~1012--1015.

\bibitem{Simon2A}
{\sc B.~Simon}, {\em Basic complex analysis}, A Comprehensive Course in
  Analysis, Part 2A, American Mathematical Society, Providence, RI, 2015.

\bibitem{Sutherland-73}
{\sc B.~Sutherland}, {\em Some exact results for one-dimensional models of
  solids}, Phys. Rev. A, 8 (1973), pp.~2514--2516.

\bibitem{Toda-70}
{\sc M.~Toda}, {\em Waves in nonlinear lattice}, Prog. Theor. Phys. Supp., 45
  (1970), pp.~174--200.

\bibitem{Weidl-96}
{\sc T.~Weidl}, {\em On the {L}ieb-{T}hirring constants {$L_{\gamma,1}$} for
  {$\gamma\geq 1/2$}}, Comm. Math. Phys., 178 (1996), pp.~135--146.

\end{thebibliography}

\end{document}